\newtheorem{theorem}{Theorem}
\newtheorem{lemma}{Lemma}
\theoremstyle{definition}
\newtheorem{definition}{Definition}
\theoremstyle{remark}
\theoremstyle{corollary}
\begin{document}\pagestyle{empty}
    
    \IEEEoverridecommandlockouts
    
    \title{On the Effective Throughput of Coded Caching: A Game Theoretic Perspective}

    \author{
        Yawei~Lu,~\IEEEmembership{Student Member,~IEEE},~Wei~Chen,~\IEEEmembership{Senior Member,~IEEE},\\~and~H. Vincent Poor,~\IEEEmembership{Fellow,~IEEE}
        
        \thanks{Yawei Lu and Wei Chen are with the Department of Electronic Engineering, Tsinghua University, Beijing, 100084, China, e-mail: lyw15@mails.tsinghua.edu.cn, wchen@tsinghua.edu.cn. Yawei Lu is also with the Department of Electrical Engineering, Princeton University, New Jersey, 08544, USA, e-mail: yawil@princeton.edu.}

        \thanks{H. Vincent Poor is with the Department of Electrical Engineering, Princeton University, New Jersey, 08544, USA, e-mail: poor@princeton.edu.}
        
    \thanks{ This research was supported in part by the U.S. National Science Foundation under Grants CCF-0939370 and CCF-1513915, the Chinese national program for special support for eminent professionals (10,000-talent program), Beijing Natural Science Foundation (4191001), and the National Natural Science Foundation of China under Grant No. 61671269. }

    }
    
    \maketitle
    \pagestyle{fancy}  
    \thispagestyle{fancy} 
    \lhead{} 
    \chead{} 
    \rhead{\thepage} 
    \lfoot{} 
    \cfoot{} 
    \rfoot{} 
    \renewcommand{\headrulewidth}{0pt} 
    \renewcommand{\footrulewidth}{0pt} 
    
    \maketitle\thispagestyle{empty}

\begin{abstract}
Recently coded caching has emerged as a promising means to handle continuously increasing wireless traffic. However, coded caching requires users to cooperate in order to minimize the overall  transmission rate.  How users with heterogeneous  preferences cooperate in coded caching and how to calculate the resulting caching gains are still  open problems. In this paper, a two-phase cache-aided network is investigated, in which users with heterogeneous preferences  are served by a base station through a shared link.  Effective throughput is considered as a performance metric. It is proved that the achievable domain of effective throughputs is a  convex set and can be characterized by its boundary. A special type of caching schemes, named uncoded placement absolutely-fair (UPAF) caching, is studied. For the two-user case, games are formulated to allocate effective throughput gains for the two users. For the general multiuser case, a  UPAF policy is proposed to organize user cooperation. It is shown that users with more concentrated preferences can obtain  higher effective throughputs.
\end{abstract}

\begin{IEEEkeywords}
   Coded caching, effective throughput, heterogeneous user preferences, game theory
\end{IEEEkeywords}

\IEEEpeerreviewmaketitle

\section{Introduction}
    Thanks to the popularization of mobile devices and the rise of multimedia applications, wireless traffic has undergone dramatic growth in recent years. In the era of 5G, various wireless communication technologies have been developed to improve network capacity and  handle the ever-increasing traffic, including polar codes,  massive multi-input and multi-output, millimeter wave,  etc. Notice that a significant feature of current wireless traffic is that content items are generated much earlier than they are requested. Caching has emerged as an inexpensive and powerful solution to cope with increasing wireless traffic by pre-storing popular content items in user terminals before users request them \cite{5g}.

      By shifting traffic at peak hours to idle hours,  caching can exploit  idle communication resources.  Wireless networks benefit from caching in many aspects. Since content items can be provided for users from  closer nodes rather than the core network,  caching contributes to lower latency communications \cite{latency1}.  From an information theoretic perspective, the fundamental latency-cache tradeoff was revealed in \cite{latency2} by introducing a performance metric termed normalized delivery time. Considerable attention has also been paid to improving energy efficiency through caching in wireless networks \cite{energy1}-\cite{energy3}. In \cite{energy1}, an in-memory storage method was proposed to reduce energy consumption in edge caching. The mismatch between the randomly arrived  energy and content requests was eliminated by caching in energy harvesting networks \cite{energy2}. The energy efficiency gains from caching in interference channels were revealed in \cite{energy3}.   In millimeter wave communications, caching helps to enhance mobility support and reduce handover failures \cite{mobility1}.     A cost-optimal caching scheme was proposed for device-to-device (D2D) networks with user mobility in \cite{mobility2}. The effective throughput resulting from caching, which indicates the reduction in the real-time traffic, was investigated in \cite{eff1}-\cite{eff4}. 
     
    Recently, a caching method termed coded caching  has  attracted considerable attention \cite{fund1}-\cite{privacy}, because it can provide global caching gains.  The seminal works \cite{fund1} and \cite{fund2} studied a two-phase (placement phase and delivery phase) system and first revealed the fundamental memory-rate tradeoff for coded caching. Thereafter, the exact memory-rate tradeoff for coded caching with uncoded prefetching was characterized in \cite{fund3}.   Coded caching was further extended to  D2D communications in \cite{d2d}. It was shown that coded caching can create multicasting opportunities for overloaded multi-input single-output channels \cite{multiante}. The work \cite{multiserver} provided a coded caching scheme for multiple-server networks. An order-optimal coded caching scheme was proposed to meet users' privacy requirement in \cite{privacy}.

     In present multimedia applications, a few  popular content items typically account for the majority of network traffic \cite{popu1}.     User preferences on these popular content items have been shown to play a critical role in designing caching schemes \cite{critical}.   Various caching schemes were developed to improve network capability by taking advantage of content popularity \cite{popu2}-\cite{uneven4}. The work \cite{popu2} proposed femtocaching to reduce downloading delay according to network topology and popularity distributions. A file grouping scheme was developed to handle the uneven preferences in cache-aided networks  \cite{uneven1}. For content items with discrete popularities, \cite{uneven2} proposed a near-optimal coded caching scheme to reduce   network load. The memory-rate tradeoff for coded caching under uneven popularities was   revealed in \cite{uneven3} and \cite{uneven4}. Game theoretic techniques were also adopted to design caching and pricing strategies \cite{game1}-\cite{game2}.


In previous studies, the focus has been on improving network performance for the sake of servers. For example, \cite{fund1}-\cite{fund2} devoted to  reduce the load that the server sustains and \cite{energy} tried to lower the overall energy consumption. Less attention has been paid to the revenue that an individual user can obtain from caching. Thanks to the progress of big data technologies, user preferences can be analyzed based on private browsing history and social relationships  \cite{historical}-\cite{social}.  Existing studies have mostly ignored users' personal differences by assuming that all the users have the same probability of requesting a  content  item \cite{popu2}-\cite{game2}.  However, user preferences have a significant impact on caching gains.    If a user is interested  in only a few content items, i.e., the user requests are deterministic or near-deterministic, this user's buffer can be filled with these content items. This user can find the desired content items directly when it issues requests. As a result, this user obtains a high effective throughput from caching. If a user is interested in many content items, the above mechanism does not work due to the buffer size constraint. It can cooperate with other users by coded caching to split the transmission cost. Few works have focused on how to calculate each user's caching gain in this case. 

In this paper, we investigate effective throughput from caching for users with heterogeneous preferences.  More specifically, multiple users are served by a base station through a shared link and user preferences are characterized by a probability measure.  Similar to the model considered in \cite{fund1} and \cite{fund2}, the network works in two phases,  a placement phase and a delivery phase. In the placement phase, network load is light and  user buffers are filled through idle spectra. In the delivery phase, user requests are revealed and the base station transmits messages to help the users recover the desired content items. Effective throughput is used as a performance metric, which indicates the reduction in the transmission cost in the delivery phase. 
Each user's effective throughput is calculated individually.
To characterize the whole achievable domain of  effective throughputs, we need to investigate all the feasible placement and delivery policies, which is however of  prohibitive  complexity. Upon that, we prove the convexity of the achievable domain and focus on a special type of policies, termed uncoded placement absolutely-fair (UPAF) policies.  The achievable domain under UPAF policies is shown to be a polygon in the two-user case. 

The higher the effective throughput a user obtains, the lower the real-time transmission cost this user  affords. If the users are selfish and each user wishes to maximize its own effective throughput, the users form a game relationship. Based on the analysis on achievable domain in the two-user case, a noncooperative game is formulated to investigate the effective throughput equilibrium between the two users. In addition, a cooperative game is studied to allocate the revenue of cooperation,  which helps in designing pricing policies. Suffering from the hardness in finding a Nash equilibrium in noncooperative games, a low-complexity numerical algorithm is proposed to give a reasonable revenue allocation for the two users. An algorithm is also presented to organize  user cooperation in the general multiuser case. 

     The rest of the paper is organized as follows. Section II presents our system model and the formal definition of placement and delivery policies. Section III proves the convexity of the achievable domain of  effective throughputs and investigates the achievable domain for the two-user case in detail. Games among two users are studied in Section IV. Section V presents a cooperation scheme for the general multiuser case. Simulation results are given in Section VI. Finally, Section VII concludes this paper and suggests some directions for future research. 

    
\section{Problem Setting}

We introduce the system model in Subsection \ref{system_model}, provide the definition of effective throughput in Subsection \ref{def_eff}, and then present an example to illustrate the research motivation in Subsection \ref{motivation}.

\subsection{System Model}\label{system_model}
Consider a Base Station (BS) connected with $K$ users  through a shared error-free link. 
The BS has access to a database of $N$ content items, denoted by $W_1,\ldots,W_N$. Assume that all the content items have an identical size of $F$ bits. Let $\Omega = \{W_n:n\in [N]\}$ denote the collection of bits of the content items.\footnote{For a positive integer $A$, $[A]$ denotes the set $\{1,2,...,A\}$}  User $k$ is equipped with a buffer of $b_kF$ bits, or equivalently, $b_k$ content items.   We refer to  $\bm{B}=[b_1,...,b_K]$ as  buffer size vector. Let $d_{k,n}$ indicate whether  user $k$  asks for content item $W_n$, i.e., $d_{k,n}=1$ if user $k$  asks for $W_n$ and $d_{k,n}=0$ otherwise. We refer to $\bm{D}=(d_{k,n})_{K\times N}$ as the demand matrix, which is a random matrix with support set $\{0,1\}^{K\times N}$.  Then user preferences can be characterized by a probability measure on $\bm{D}$, denoted as $\mathcal{P}$. In practice, user preferences can be analyzed based on private browsing history and social relationships  \cite{historical}-\cite{social}. The probability that user $k$ requests $W_n$ is given by $p_{k,n}=\mathcal{P}(\{\bm{D}:d_{k,n}=1\})$. The matrix $\bm{P}=(p_{k,n})_{K\times N}$ is referred to as the user preference matrix. It should be noted that we do not assume that each user requests only one content item. In other words, the  summations $\sum_{n\in[N]}d_{k,n}$ and $\sum_{n\in[N]}p_{k,n}$ may not be 1.

This cache-aided network operates in two phases, namely a placement phase and a delivery phase. In the placement phase,  user requests are not specific. The users prefetch data from the BS and cache them in their buffers with the knowledge of  the probability measure $\mathcal{P}$.  In the delivery phase, the users issue requests for the content items. The demand matrix $\bm{D}$ reduces to a deterministic matrix $\bm{\hat{D}}=(\hat{d}_{k,n})_{K\times N}$.
 Local buffers provide useful information in recovering the requested content items. The users probably also need to turn to the BS in order to recover all the requested content items. The network described above is referred to as $(\bm{B},N,\mathcal{P})$-Caching.

\subsection{Formal Problem Statement}\label{def_eff}
We provide a formal description of  placement and delivery policies for $(\bm{B},N,\mathcal{P})$-Caching. 

\begin{definition}
    A placement and delivery policy $\pi = \left((\pi^p_k)_{k\in [K]},(\pi^g_{\mathcal{U}})_{\mathcal{U}\subseteq [K]},(\pi^r_k)_{k\in [K]}\right)$ consists of the following three types of functions.     \\
    i) Content prefetching function $(\pi^p_k)_{k\in [K]}$:  For each user $k$, $\pi_k^p$  determines the bits prefetched from the BS in the placement phase and thus gives this user's buffer state in the beginning of the delivery phase. Specifically, we have 
    \begin{equation}
        \pi_k ^p: (W_n)_{n\in [N]}, \mathcal{P} \rightarrow C_k.
    \end{equation}
  Due to the buffer size constraint, $C_k$ should satisfy $|C_k|\le b_kF$.\footnote{For a set $S$, $|S|$ denotes its cardinality.}\\
  ii) Message generation function $(\pi^g_{\mathcal{U}})_{\mathcal{U}\subseteq [K]}$: In the delivery phase, user requests are revealed and hence $\bm{\hat{D}}$ is known. For each subset $\mathcal{U}$ of $[K]$, $\pi^g_{\mathcal{U}}$ generates a message $M_{\mathcal{U}}$ according to $C_k$ and $\bm{\hat{D}}$, i.e., 
  \begin{equation}
  \pi^g_{\mathcal{U}} : (W_n)_{n\in [N]}, C_k,\bm{\hat{D}} \rightarrow M_{\mathcal{U}}.
  \end{equation}
The message $M_{\mathcal{U}}$ is generated to help users in $\mathcal{U}$  recover the requested content items.\\
iii) Content recovering function  $(\pi^r_k)_{k\in [K]}$: After receiving the transmitted messages, each user attempts to recover the requested content items by $\pi^r_k$, i.e., 
\begin{equation}
\pi_k^r : C_k, (M_{\mathcal{U}})_{\mathcal{U}:k\in \mathcal{U}} \rightarrow \{\hat{W}_n^k: n\in \hat{D}_k\},
\end{equation}
where $\hat{D}_k = \{n:\hat{d}_{k,n}=1\}$ represents the set of content items requested by user $k$.\footnote{$\mathcal{U}:k\in \mathcal{U}$ traverses all the subsets of $[K]$ that contain $k$.}
$\hat{W}_n^k$ stands for the estimated $W_n$.
\end{definition}

 For a placement and delivery policy $\pi$, the error probability is defined as
\begin{equation}
    \max_{k\in [K]} \max_{n\in [N]} \text{Pr}\{\hat{W}_n^k\neq W_n|d_{k,n}=1\},
\end{equation}
where $\text{Pr}\{\cdot|\cdot\}$ denotes the conditional probability.
Given a $(\bm{B},N,\mathcal{P})$-Caching, there are numerous  placement and delivery policies that can satisfy the user requests.
A traditional one is just to ignore user buffers and transmit the requested content items in the delivery phase directly. Caching and multicasting enable us to satisfy user requests in a more effective manner. For a policy $\pi$, we define the effective throughput
of user $k$ as 
\begin{equation}\label{effective_throughput}
\begin{split}
    R_k =&\frac{1}{F} \mathbb{E}_{\bm{D}}\left(\sum_{n\in [N]}Fd_{k,n}-\sum_{\mathcal{U}:k\in \mathcal{U}}\frac{|M_{\mathcal{U}}|}{|\mathcal{U}|}\right)\\
    =& \sum_{n\in [N]}p_{k,n}-\mathbb{E}_{\bm{D}}\left(\sum_{\mathcal{U}:k\in \mathcal{U}}\frac{|M_{\mathcal{U}}|}{|\mathcal{U}|F}\right),
\end{split}
\end{equation}
where $\mathbb{E}_{\bm{D}}(\cdot)$ denotes the mathematical expectation with respect to the random matrix $\bm{D}$. The summation $\sum_{n\in [N]}Fd_{k,n}$ stands for the number of bits transmitted to user $k$ in the delivery phase without caching and multicasting. Because $M_\mathcal{U}$ is transmitted to $|\mathcal{U}|$ users, each user incurs $\frac{1}{|\mathcal{U}|}$ cost of the transmission. Then, $\sum_{\mathcal{U}:k\in \mathcal{U}}\frac{|M_{\mathcal{U}}|}{|\mathcal{U}|F}$ indicates the total  transmission cost afforded by user $k$.  
In addition, we normalize the transmission cost by $\frac{1}{F}$. 
 Thus, the effective throughput defined in Eq. (\ref{effective_throughput}) represents the reduction in the transmission cost of user $k$.  
 
 The summation of all users' effective throughputs has
 \begin{equation}\label{total_eff}
 \begin{split}
 \sum_{k\in[K]}R_k =\sum_{k\in[K]}\sum_{n\in [N]}p_{k,n}-\frac{1}{F}\sum_{\mathcal{U}\subseteq [K]}\mathbb{E}_{\bm{D}}\left({|M_{\mathcal{U}}|}\right).
 \end{split}
 \end{equation}
 Note that $\sum_{k\in[K]}\sum_{n\in [N]}p_{k,n}$ and $\sum_{\mathcal{U}\subseteq [K]}\mathbb{E}_{\bm{D}}\left({|M_{\mathcal{U}}|}\right)$ represent the expected number of content items requested by the users and the total number of bits transmitted by the BS, respectively. The right-hand side of Eq. (\ref{total_eff}) indicates the reduction in the number of the bits transmitted by the BS due to caching and multicasting (normalized by the content size $F$). Thus, $R_k$ indicates the revenue of user $k$ from caching. As  caching schemes studied in previous literatures like \cite{fund1}-\cite{d2d}, the proposed policies  may incur a high signaling overhead. This overhead can be partly eliminated by large content items.

Users' effective throughputs vary with policies. All the possible values of effective throughputs  form an achievable domain.

\begin{definition}
    
    The  vector $(R_k)_{k\in [K]}$ is \emph{achievable} if for every $\varepsilon >0$ and every sufficiently large $F$  there exists a policy $\pi = \pi (\varepsilon, F)$ that achieves $(R_k)_{k\in [K]}$ with error probability lower than $\varepsilon$. 
   For a $(\bm{B},N,\mathcal{P})$-Caching, the \emph{achievable domain} of effective throughputs is defined as
    \begin{equation}
    \begin{split}
   \mathcal{ R}(\bm{B},N,\mathcal{P})= \{(R_k )_{k\in [K]}  : (R_k )_{k\in [K]} \mbox{ is achievable} \}.
    \end{split}     
    \end{equation}
\end{definition}

If no confusion arises, we  simply denote $ \mathcal{ R}(\bm{B},N,\mathcal{P})$ by $ \mathcal{R}$. The achievable domain $ \mathcal{R}$ depends on  the buffer size vector, the number of content items, as well as user preferences. 
  
  \subsection{A Motivating Example}\label{motivation}

  In this subsection, we present a demo to illustrate the research motivation and the impact of user preferences on effective throughputs. As shown in Fig. \ref{preferencecost}, two users are interested in two content items, denoted as $A$ and $B$. Each user at most caches one content item. User  1 requests the two content items with probability 99\% and 1\%, respectively. User 2 requests the two content items with an identical probability, 50\%. In this demo, we assume each user requests only one content item. The coded caching scheme suggests to divide each content item into two portions and then the user requests can always be satisfied by transmitting a coded packet of size 0.5 \cite{fund1}. Then both the two users achieve an effective throughput 0.75. The coded caching scheme requires the two users cooperate in both the two phases. Let us consider a noncooperation scheme that each user caches the most popular content items. The two users split the transmission costs if the delivered data are useful for both the two users.  It is seen that user 1 achieves a much higher effective throughput in the noncooperation scheme than it does in coded caching.  As a result, user preferences have a significant impact on caching gains and should be taken into account  in cache-aided networks. 
  
    \begin{figure} 
      \centering
      \includegraphics[width=10.3cm]{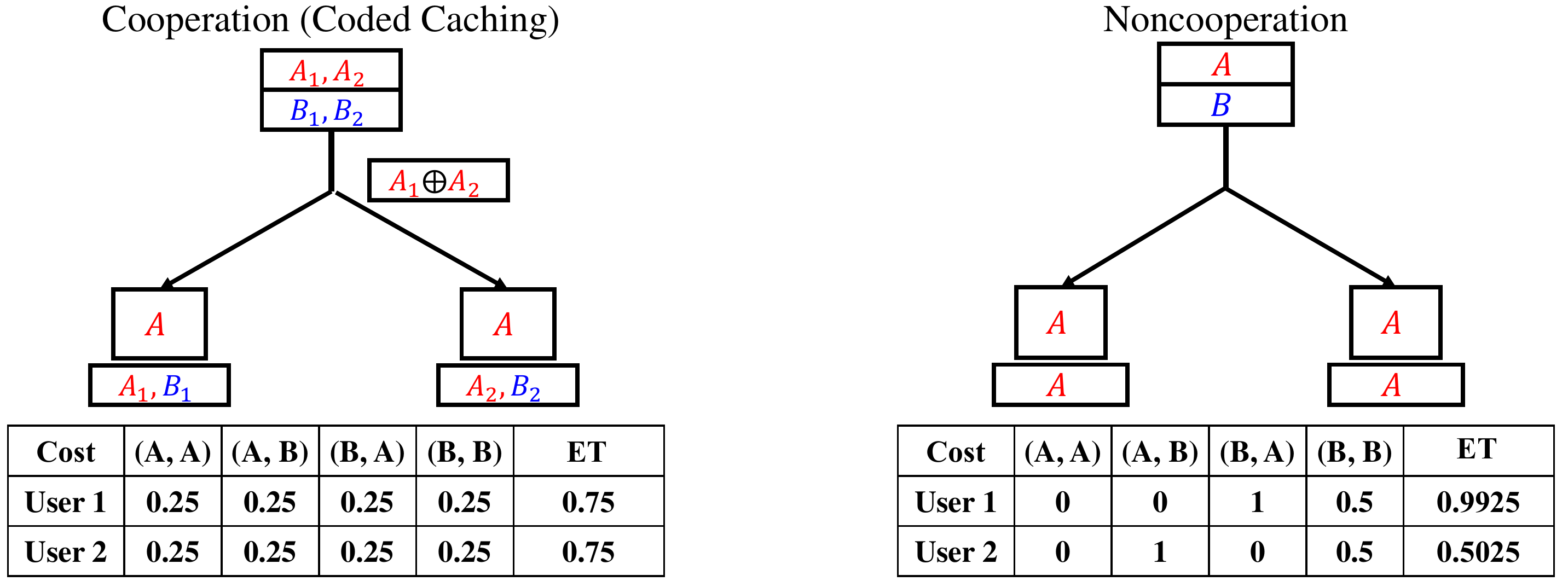}
      \caption{Effective throughputs under different caching schemes. The pair (A,A) in the tables stands for the realization when both the two users request content $A$ and the other pairs are similar. ET is short for effective throughput.  }    
      \label{preferencecost}  
  \end{figure}

  \section{Property of the Achievable Domain}
  
  In this section, we prove the convexity of the achievable domain of effective throughputs   and then focus on a special type of placement and delivery policies, termed UPAF  policies. It will be shown that  $(\bm{B},N,\mathcal{P})$-Caching with two users has an achievable domain as a polygon under UPAF policies. 
  
\subsection{Convexity of  the Achievable Domain}
The following theorem presents the convexity of the achievable domain of effective throughputs  .

\begin{theorem}\label{convexity}
   For any $(\bm{B},N,\mathcal{P})$-Caching, the achievable domain $ \mathcal{R}$ is a convex set and for any point $(R_k)_{k\in [K]}\in \mathcal{R}$,
   \begin{equation}\label{left_bottom_point}
       \{(x_k)_{k\in[K]}:0\le x_k\le R_k\}\subseteq \mathcal{R}.
   \end{equation}
\end{theorem}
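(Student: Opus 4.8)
The plan is to establish the two claims separately, both by explicit policy constructions. For the convexity claim, I would take two achievable vectors $(R_k^{(1)})_{k\in[K]}$ and $(R_k^{(2)})_{k\in[K]}$ with witnessing policies $\pi^{(1)}$ and $\pi^{(2)}$, and show that for any $\lambda\in[0,1]$ the convex combination $\lambda(R_k^{(1)})+(1-\lambda)(R_k^{(2)})$ is achievable. The natural device is a time/file-sharing (splitting) argument: partition the bits of every content item $W_n$ into two disjoint sub-items of fractions $\lambda$ and $1-\lambda$ of $F$, and run $\pi^{(1)}$ on the first sub-database and $\pi^{(2)}$ on the second. Since the effective throughput in Eq.~(\ref{effective_throughput}) is linear in the bit-lengths $|M_{\mathcal{U}}|$ and the per-item request indicators $d_{k,n}$, the resulting policy's effective throughput for user $k$ is exactly $\lambda R_k^{(1)}+(1-\lambda)R_k^{(2)}$ in the limit of large $F$; buffer constraints are respected because $\lambda b_k F + (1-\lambda) b_k F = b_k F$; and the error probability is at most the sum of the two error probabilities, hence can be driven below any $\varepsilon$. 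One technical point to handle carefully is that $\lambda F$ need not be an integer, so I would round the split to $\lfloor\lambda F\rfloor$ and $F-\lfloor\lambda F\rfloor$ bits and observe that the $O(1/F)$ discrepancy vanishes as $F\to\infty$, which is exactly what the asymptotic ($\varepsilon,F$) definition of achievability allows.

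For the second claim, Eq.~(\ref{left_bottom_point}), I would argue that any vector coordinatewise dominated by an achievable vector is itself achievable, i.e., the achievable domain is "downward closed" in the positive orthant. The idea is that a user can always artificially \emph{degrade} its own effective throughput: given a policy $\pi$ achieving $(R_k)_{k\in[K]}$ and a target $(x_k)_{k\in[K]}$ with $0\le x_k\le R_k$, modify the policy so that user $k$ pretends to need extra bits delivered uncoded. Concretely, one can have the BS transmit, in addition to the messages of $\pi$, private "junk" messages $M_{\{k\}}$ of length $(R_k-x_k)F$ addressed to user $k$ alone; these add $(R_k-x_k)$ to user $k$'s afforded transmission cost without affecting any other user (since singleton messages are charged entirely to that one user in Eq.~(\ref{effective_throughput})), thereby reducing $R_k$ to $x_k$ while leaving $R_j$ for $j\ne k$ unchanged. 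Alternatively, and perhaps more cleanly, one can simply not cache part of user $k$'s buffer and fall back to uncoded delivery for a corresponding fraction of its requests. The point $0$ is trivially achievable (cache nothing, transmit everything uncoded — but then $R_k=0$), and any intermediate point is reached by applying the degradation to each coordinate by the required amount.

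The main obstacle, I expect, is making the splitting construction rigorous at the level of the formal policy definition — in particular verifying that splitting the database and running two independent sub-policies still fits the template $\pi=\big((\pi^p_k),(\pi^g_{\mathcal{U}}),(\pi^r_k)\big)$ (the message sets $M_{\mathcal{U}}$ of the combined policy are concatenations of the corresponding messages of $\pi^{(1)}$ and $\pi^{(2)}$, so $|M_{\mathcal{U}}|=|M_{\mathcal{U}}^{(1)}|+|M_{\mathcal{U}}^{(2)}|$), and confirming that the expectation over $\bm{D}$ in Eq.~(\ref{effective_throughput}) passes through the split because the demand matrix $\bm{D}$ is shared by both sub-policies (same request events on both sub-databases). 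Once the bookkeeping for $|M_{\mathcal{U}}|$ and the buffer sizes is in place, both parts follow from the linearity of Eq.~(\ref{effective_throughput}) together with the $F\to\infty$ slack in the definition of achievability; no combinatorial difficulty remains.
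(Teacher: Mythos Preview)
Your proposal is correct and follows essentially the same approach as the paper: a file-splitting argument (partition each $W_n$ and each buffer into fractions $\lambda$ and $1-\lambda$, run the two witnessing policies on the respective sub-databases) for convexity, and appending private ``junk'' singleton messages $M_{\{k\}}$ of length $(R_k-x_k)F$ for the downward-closure part. Your treatment is in fact slightly more careful than the paper's on the integrality of $\lambda F$ and on verifying that the combined scheme fits the formal policy template.
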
 
\begin{figure} 
    \centering
    \includegraphics[width=5.31cm]{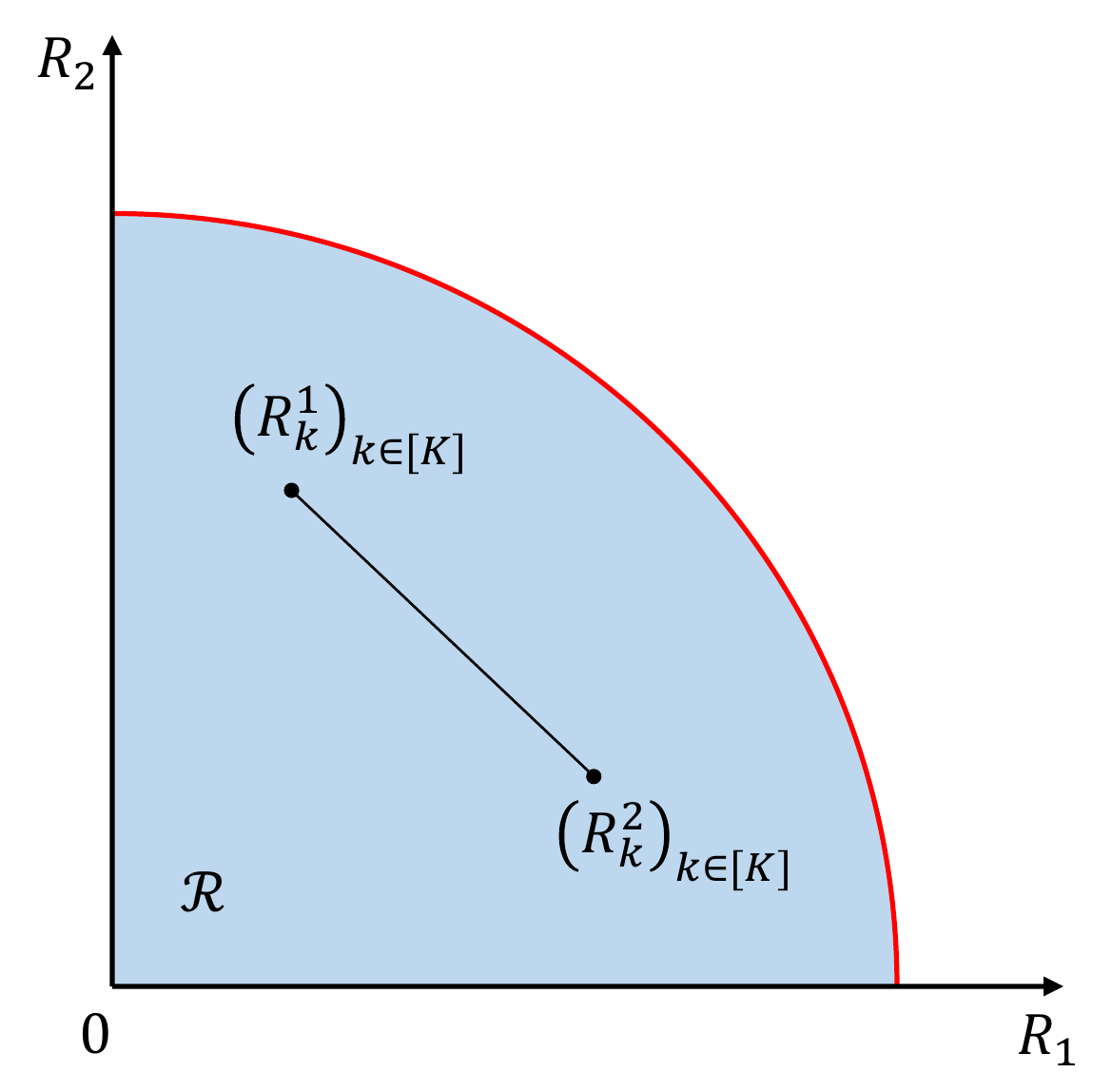}
    \caption{Convex set in the positive orthant. This set can be completely characterized by its boundary in the positive orthant.}    
    \label{convex}  
\end{figure}

\begin{proof}
Let $(R_k^1)_{k\in [K]}$ and $(R_k^2)_{k\in [K]}$ be two points in $\mathcal{R}$, as illustrated in Fig. \ref{convex}. To prove the convexity, we only need to show that $(\alpha R_k^1+ (1-\alpha)R_k^2)_{k\in [K]}$ also belongs to $\mathcal{R}$ for any $\alpha \in (0,1)$. Given $\varepsilon>0$ and $F$ large enough, suppose $\pi_1=\pi_1(\varepsilon,\alpha F)$ and $\pi_2=\pi_2(\varepsilon,(1-\alpha) F)$ achieve $(R_k^1)_{k\in [K]}$ and $(R_k^2)_{k\in [K]}$ with error probability lower than $\varepsilon$. We divide each content item into two portions, containing $\alpha F$ bits and $(1-\alpha )F$ bits respectively. By further dividing each buffer into two portions of size $\alpha b_kF$ bits and $(1-\alpha )b_kF$ bits, the original $(\bm{B},N,\mathcal{P})$-Caching can be viewed as two  $(\bm{B},N,\mathcal{P})$-Caching with different content sizes. The policies $\pi_1$ and $\pi_2$ can be respectively applied in the two  $(\bm{B},N,\mathcal{P})$-Caching. The error probability is bounded by $1-(1-\varepsilon)^2=2\varepsilon-\varepsilon^2$. The effective throughput of user $k$ is given by
\begin{equation*}
\begin{split}
R_k= &\sum_{n\in [N]}p_{k,n}-\mathbb{E}_{\bm{D}}\left(\sum_{\mathcal{U}:k\in \mathcal{U}}\frac{|M^1_{\mathcal{U}}|+|M^2_{\mathcal{U}}|}{|\mathcal{U}| F}\right)\\
=& \alpha \!\left(\sum_{n\in [N]}p_{k,n}-\mathbb{E}_{\bm{D}}\left(\sum_{\mathcal{U}:k\in \mathcal{U}}\frac{|M^1_{\mathcal{U}}|}{|\mathcal{U}|\alpha F}\right)\right) \!+\! (1-\alpha)\! \left(\sum_{n\in [N]}p_{k,n}-\mathbb{E}_{\bm{D}}\left(\sum_{\mathcal{U}:k\in \mathcal{U}}\frac{|M^2_{\mathcal{U}}|}{|\mathcal{U}|(1-\alpha) F}\right)\right)\\
 =& \alpha R_k^1 + (1-\alpha )R_k^2,
\end{split}
\end{equation*}
where $M^1_{\mathcal{U}}$ and $M^2_{\mathcal{U}}$ are messages generated by $\pi_1$ and $\pi_2$ respectively. For any $\alpha \in (0,1)$, $(\alpha R_k^1+ (1-\alpha)R_k^2)_{k\in [K]}$ can be achieved by 
combining $\pi_1$ and $\pi_2$ for sufficiently large $F$. Hence, $\mathcal{R}$ is a convex set.

Eq. (\ref{left_bottom_point}) implies that if a point $(R_k)_{k\in [K]}$ is in $ \mathcal{ R}$, any point $(x_k)_{k\in [K]}$  such that  $0\le x_k\le R_k$ also belongs to  $\mathcal{ R}$. We prove that by constructing a policy achieving $(x_k)_{k\in [K]}$. Suppose $\pi(\varepsilon,F)$ achieves $(R_k)_{k\in [K]}\in \mathcal{ R}$ with error probability lower than $\varepsilon$. In the delivery phase, $\pi(\varepsilon,F)$ transmits $M_{\{k\}}$ to user user $k$ exclusively. Let us consider a new policy in which the BS additionally transmits a random message of $(R_k-x_k)F$ bits to user $k$. For this new policy, the error probability remains unchanged while the effective throughputs reduce to $(x_k)_{k\in [K]}$. Thus,  $(x_k)_{k\in [K]}$ is achievable. 
\end{proof}

 According to Theorem \ref{convexity}, we only need to pay attention to the boundary in the positive orthant in order to characterize $\mathcal{ R}$, as illustrated in Fig. \ref{convex}. In addition, we can have more insights on the achievable domain of $(\bm{B},N,\mathcal{P})$-Caching. The values of different users' effective throughputs  are  interchangeable. When a policy brings one user a high effective throughput, the other users may only obtain low effective throughputs.

It is intractable to investigate all the feasible policies and characterize the whole achievable domain of effective throughputs. In the paper, we focus on absolutely fair (AF) policies.
\begin{definition}
    A policy $\pi$ is \emph{absolutely fair} if each user in $\mathcal{U}$ can  obtain  the same amount of useful information from the   message $M_\mathcal{U}$, i,e,
    \begin{equation}
        I(M_\mathcal{U}; \{W_n:n\in \hat{D}_k\}|C_k)= I(M_\mathcal{U}; \{W_n:n\in \hat{D}_j\}|C_j)
    \end{equation}
    for $k,j\in \mathcal{U}$.\footnote{ $I(\cdot;\cdot|\cdot)$ represents the  conditional mutual information.}
\end{definition}
 By enforcing other users receive the bits needed only by a certain user, the transmission cost of this user can be simply reduced. In an AF policy, such enforcement is forbidden. In other words, all the users are fair and no one incurs  transmission costs for other users.  If an AF policy has an uncoded placement process, the policy is referred to as a  UPAF policy. 
 In the rest of the paper, our attention will be paid to UPAF policies.  The significance of studying UPAF policies is twofold. Theoretically, UPAF policies provide an inner bound for the whole achievable domain. In addition, UPAF policies are practical, since all the users are fairly treated and the placement process is simpler compared with a coded one. 

\subsection{ The Achievable Domain for $(\bm{B},N,\mathcal{P})$-Caching with Two Users }\label{twousersection}

In this subsection, we investigate the achievable domain of $(\bm{B},N,\mathcal{P})$-Caching with two users.
A policy $\pi$ maps the content items and requests into the buffer states $(C_k)_{k\in [K]}$ and the transmitted messages $(M_{\mathcal{U}})_{\mathcal{U}\subseteq[K]}$. Thus we can use $(C_k)_{k\in [K]}$ and $(M_{\mathcal{U}})_{\mathcal{U}\subseteq[K]}$ to represent the policy $\pi$. To characterize $\mathcal{ R}$ for the  two-user case, we only need to study all  feasible $C_1,C_2,M_{\{1\}}, M_{\{2\}},$ and $M_{\{1,2\}}$. 

\begin{figure} 
    \centering
    \includegraphics[width=8.1cm]{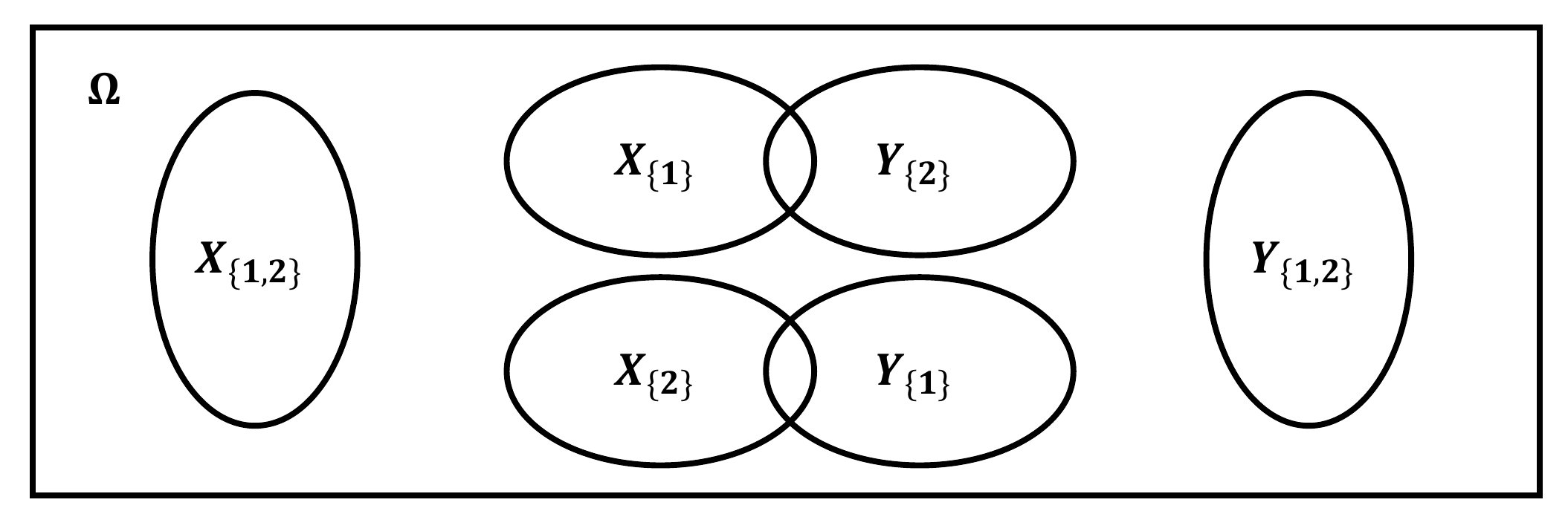}
    \caption{Venn Diagram of $X_{\mathcal{U}}$ and $Y_{\mathcal{U}}$.}    
    \label{venn}  
\end{figure}

For a UPAF policy, we have $C_1,C_2\subseteq \Omega$. Let us define 
\begin{eqnarray}
    X_{\emptyset} &=& \Omega\setminus(C_1\cup C_2),\\
    X_{\{1\}} &=& C_1\setminus C_2,\\
    X_{\{2\}} &=& C_2\setminus C_1,\\
    X_{\{1,2\}} &=& C_1 \cap C_2.
\end{eqnarray}
Then, $X_{\mathcal{U}}$ stands for the bits exclusively cached in the buffer of  user $k$ for $k\in \mathcal{U}.$ It is seen that $C_1 = X_{\{1\}}\cup X_{\{1,2\}}$ and $C_2 = X_{\{2\}}\cup X_{\{1,2\}}$. In the delivery phase, the demand matrix $\bm{\hat{D}}$ is known. Then, user $1$ and user $2$ wish to recover the bits in $Q_1=\{W_n: n\in \hat{D}_1\}$ and $Q_2=\{W_n:  n\in \hat{D}_2\}$, respectively. We define 
\begin{eqnarray}\label{req_set1}
Y_{\{1\}} &=& \left(Q_1\setminus C_1\right)\setminus \left(Q_2\setminus C_2\right),\\ \label{req_set2} 
Y_{\{2\}} &=& \left(Q_2\setminus C_2\right)\setminus \left(Q_1\setminus C_1\right),\\ \label{req_set3}
Y_{\{1,2\}} &=& \left(Q_1\setminus C_1\right) \cap \left(Q_2\setminus C_2\right).
\end{eqnarray}
Then $Y_{\mathcal{U}}$ is the set of bits requested only by users in $\mathcal{U}$ but not cached in the buffers of these users. As a result, users in $\mathcal{U}$ want to recover $Y_{\mathcal{U}}$ from the transmitted messages. Fig. \ref{venn} illustrates the relations between $X_{\mathcal{U}}$ and $Y_{\mathcal{U}}$.

To satisfy  user requests in the delivery phase, we can simply set $M_{\{1\}}=Y_{\{1\}}, M_{\{2\}}=Y_{\{2\}}, $ and $M_{\{1,2\}}=Y_{\{1,2\}}$. Notice that user $1$ contains a part of bits that user $2$ requests and vice versa. Index coding can be applied to create more multicasting opportunities and improve the effective throughputs of both user $1$ and user $2$ \cite{indexcoding}. In this case, the transmission costs of   user $1$ and user $2$ are given by 
\begin{eqnarray}
\begin{split}\label{through1}
 \bar{R}_1(C_1,C_2,\bm{\hat{D}})=&\frac{|Y_{\{1\}}| }{F}+ \frac{1}{2F}|Y_{\{1,2\}}| - \frac{1}{2F}\times\min\left\lbrace |X_{\{1\}}\cap Y_{\{2\}}|,|X_{\{2\}}\cap Y_{\{1\}} | \right\rbrace,
\end{split}   \\
\begin{split}\label{through2}
\bar{R}_2(C_1,C_2,\bm{\hat{D}})=&\frac{|Y_{\{2\}}| }{F}+ \frac{1}{2F}|Y_{\{1,2\}}| - \frac{1}{2F}\times\min\left\lbrace |X_{\{1\}}\cap Y_{\{2\}}|,|X_{\{2\}}\cap Y_{\{1\}} | \right\rbrace.
\end{split}  
\end{eqnarray}
Note that Eqs. (\ref{through1}) and (\ref{through2}) give the minimum transmission costs for given   $C_1, $ $C_2,$ and   $\bm{\hat{D}}$, because all the possible multicasting opportunities have been exploited.

The transmission costs, i.e., Eqs. (\ref{through1}) and (\ref{through2}),  can also be represented as functions of the two users' buffer states. To this end, we denote $X_{\mathcal{U},n}=X_{\mathcal{U}}\cap W_n$.  According to De Morgan's laws,   Eqs. (\ref{req_set1})-(\ref{req_set3}) can be rewritten as 
\begin{eqnarray}\label{re_req_set1}
Y_{\{1\}} &=&\left( \bigcup\limits_{n\in \hat{D}_1} X_{\{2\},n}\right)\bigcup \left( \bigcup\limits_{n\in \hat{D}_1\setminus \hat{D}_2} X_{\emptyset,n}\right),\\ \label{re_req_set2} 
Y_{\{2\}} &=& \left( \bigcup\limits_{n\in \hat{D}_2} X_{\{1\},n}\right)\bigcup \left( \bigcup\limits_{n\in \hat{D}_2\setminus \hat{D}_1} X_{\emptyset,n}\right), \\ \label{re_req_set3}
Y_{\{1,2\}} &=&  \bigcup\limits_{n\in \hat{D}_1\cap \hat{D}_2} X_{\emptyset,n}.
\end{eqnarray}
 Defining $x_{\mathcal{U},n}=\frac{|X_{\mathcal{ U},n}|}{F}$ and substituting Eqs. (\ref{re_req_set1})-(\ref{re_req_set3}) into Eqs. (\ref{through1})-(\ref{through2}) yield
\begin{eqnarray}
\begin{split}\label{re_through1}
\bar{R}_1(C_1,C_2,\bm{\hat{D}})=& \sum_{n\in \hat{D}_1} x_{\{2\},n}  +\sum_{n\in \hat{D}_1\setminus  \hat{D}_2} x_{\emptyset,n} + \frac{1}{2}\sum_{n\in \hat{D}_1\cap   \hat{D}_2} x_{\emptyset,n} - \frac{1}{2}\times \\&\min\left\lbrace \sum_{n\in \hat{D}_2} x_{\{1\},n},\sum_{n\in \hat{D}_1} x_{\{2\},n} \right\rbrace,
\end{split}   
\end{eqnarray}
\begin{eqnarray}
\begin{split}\label{re_through2}
\bar{R}_2(C_1,C_2,\bm{\hat{D}})=& \sum_{n\in \hat{D}_2} x_{\{1\},n} +\sum_{n\in \hat{D}_2\setminus  \hat{D}_1} x_{\emptyset,n}  + \frac{1}{2}\sum_{n\in \hat{D}_1\cap   \hat{D}_2} x_{\emptyset,n} - \frac{1}{2}\times\\&\min\left\lbrace \sum_{n\in \hat{D}_2}x_{\{1\},n} ,\sum_{n\in \hat{D}_1} x_{\{2\},n} \right\rbrace.
\end{split}  
\end{eqnarray}
The variable $x_{\mathcal{U},n} $ stands for how much $X_{\mathcal{U},n}$ accounts for $W_n$. 

For fixed uncoded placement $C_1$ and $C_2$, the maximum effective throughputs achieved by UPAF policies are given by 
\begin{eqnarray}
\label{eff1}
    R_1(C_1,C_2) =& \sum_{n\in [N]}p_{1,n}-\mathbb{E}_{\bm{D}}\left(\bar{R}_1(C_1,C_2,\bm{D})\right),\\    
    \label{eff2}
    R_2(C_1,C_2) =& \sum_{n\in [N]}p_{2,n}-\mathbb{E}_{\bm{D}}\left(\bar{R}_2(C_1,C_2,\bm{D})\right). 
\end{eqnarray}
From the proof of Theorem \ref{convexity}, one can see that Theorem  \ref{convexity} also holds for UPAF policies. Thus, the boundary of the achievable domain  under UPAF policies can be given by solving the following optimization problem:
\begin{equation}\label{op_1}
    \begin{split}
    \max_{x_{\mathcal{U},n}}\quad  &\alpha R_1(C_1,C_2) + (1-\alpha) R_2(C_1,C_2) \\
    \text{s.t.}\quad & \sum_{\mathcal{U}\subseteq \{1,2\}}x_{\mathcal{U},n} \le 1,  n\in [N],\\
    & \sum_{\mathcal{U}:k\in \mathcal{U}}\sum_{n\in [N]}x_{\mathcal{U},n} \le b_k,  k\in \{1,2\},\\
    & x_{\mathcal{U},n} \ge 0,\mathcal{U}\subseteq\{1,2\},n\in [N].
    \end{split}
\end{equation}
The first constraint is due to the fact that the sets $x_{\mathcal{U},n}$ are disjoint for $\mathcal{U}\subseteq \{1,2\}$. The second constraint forbids  buffer overflows in the placement phase. By tuning $\alpha$ from 0 to 1, we obtain the boundary of the achievable domain of effective throughputs   under UPAF policies.

 Problem (\ref{op_1}) is almost a linear programming (LP) problem except the minimizing operations. To simplify problem (\ref{op_1}), we introduce auxiliary variables $z_{D_1,D_2} = \min\left\lbrace \sum_{n\in D_2}x_{\{1\},n} ,\right.$ $\left.\sum_{n\in D_1} x_{\{2\},n} \right\rbrace$ and vectors
 \begin{eqnarray}
 \bm{x}_{1}&=& [x_{\{1\},1} + x_{\{1,2\},1},...,x_{\{1\},N}+ x_{\{1,2\},N}], \\
 \bm{x}_{2}&= & [x_{\{2\},1} + x_{\{1,2\},1},...,x_{\{2\},N}+ x_{\{1,2\},N}],\\
 \bm{x}_{3}&=& [x_{\{1,2\},1},...,x_{\{1,2\},N}], \\  
 \bm{z}& =& [z_{\emptyset,\emptyset},...,z_{[N],[N]},1].
 \end{eqnarray}
 Then $\bm{x}_1$ and  $\bm{x}_2$ represent the proportion of each content item  cached in user 1 and user 2, respectively. The vector $ \bm{x}_3$ represents the proportion of each content item cached in both the two users' buffers. The vector $\bm{z}$ is a function of $\bm{x}_1,\bm{x}_2,$ and $ \bm{x}_3$. The last element of $\bm{z}$ is a fixed constant $1$, which is used to express the constant term in Eqs. (\ref{eff1})-(\ref{eff2}) later.  The constraints on $z_{D_1,D_2}$ are linear:
   \begin{eqnarray}
   \label{zcon1}
 z_{D_1,D_2}\le \sum_{n\in D_2}x_{\{1\},n},\\
 \label{zcon2}
  z_{D_1,D_2}\le \sum_{n\in D_1} x_{\{2\},n}.
 \end{eqnarray} 
 Then, problem (\ref{op_1}) can be transformed into an equivalent LP problem\footnote{Let $\bm{v_1}=[v_{11},...,v_{1M}]$ and $ \bm{v_2}=[v_{21},...,v_{2M}]$ be two $M$-dimensional vectors. By $\bm{v}_1\le \bm{v_2}$, we mean $v_{1m}\le v_{2m}$ for $m\in [M]$. }
  \begin{equation}\label{op_2}
 \begin{split}
 \max_{\bm{x}_1,\bm{x}_2,\bm{x}_3,\bm{z}}\quad  &\alpha \left( \sum_{i=1}^{3} \bm{a}_{1,i}^T \bm{x}_{i} + \bm{b}_1^T \bm{z} \right) + (1-\alpha )\left(  \sum_{i=1}^{3} \bm{a}_{2,i}^T \bm{x}_{i} + \bm{b}_2^T \bm{z}  \right)  \\
 \text{s.t.}\quad & \sum_{i=1}^3\bm{A}_i\bm{x}_i+\bm
{Bz}\le \bm{h},
 \end{split}
 \end{equation}
 where $\left( \sum_{i=1}^{3} \bm{a}_{k,i}^T \bm{x}_{i} + \bm{b}_k^T \bm{z} \right)$ gives the effective throughput of user $k$ and $\bm{A}_i, \bm{B}$, as well as $\bm{h}$ result from the constraints in problem (\ref{op_1}) and Eqs. (\ref{zcon1})-(\ref{zcon2}). It should be noted that the coefficients $\bm{a}_{1,i}, \bm{b}_1, \bm{a}_{2,i}, \bm{b}_2$ depend on and only on the user preferences. 
Based on problem (\ref{op_2}), we have the following theorem.

\begin{theorem}\label{theo2}
    For $(\bm{B},N,\mathcal{P})$-Caching with two users, the achievable domain under UPAF policies is a polygon. 
\end{theorem}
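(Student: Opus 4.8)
The plan is to read the boundary of the achievable domain under UPAF policies --- call it $\mathcal{R}_{\mathrm{UPAF}}$ --- off the parametric linear program (\ref{op_2}) and to show that this boundary has only finitely many edges. As the text notes, the proof of Theorem \ref{convexity} applies verbatim to UPAF policies, so $\mathcal{R}_{\mathrm{UPAF}}$ is convex, lies in the nonnegative orthant, and (by (\ref{left_bottom_point})) is down-closed there; hence it is completely determined by its ``upper-right'' boundary, and a point $(R_1,R_2)$ lies on that boundary iff it maximizes $\alpha R_1 + (1-\alpha)R_2$ over $\mathcal{R}_{\mathrm{UPAF}}$ for some $\alpha\in[0,1]$. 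Since the text has already reduced this maximization --- first for fixed $C_1,C_2$ and then, through memory-sharing across sub-policies, over all UPAF policies --- to the LP (\ref{op_2}), it suffices to understand how the optimum of (\ref{op_2}) behaves as $\alpha$ sweeps $[0,1]$.

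First I would observe that the feasible set $\mathcal{G}$ of (\ref{op_2}) --- the tuples $(\bm{x}_1,\bm{x}_2,\bm{x}_3,\bm{z})$ satisfying the simplex constraints $\sum_{\mathcal{U}}x_{\mathcal{U},n}\le 1$, the buffer constraints, nonnegativity, and (\ref{zcon1})--(\ref{zcon2}) --- is a polytope: every $x_{\mathcal{U},n}\in[0,1]$ and every $z_{D_1,D_2}$ is squeezed between $0$ and a sum of the $x$'s, so $\mathcal{G}$ is bounded, and it is cut out by finitely many linear inequalities. The objective of (\ref{op_2}) is linear in $(\bm{x}_1,\bm{x}_2,\bm{x}_3,\bm{z})$ with coefficients depending affinely on $\alpha$, so for each $\alpha$ the maximum is attained at a vertex of $\mathcal{G}$, and $\mathcal{G}$ has only finitely many vertices. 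Writing $L$ for the linear throughput map $(\bm{x}_1,\bm{x}_2,\bm{x}_3,\bm{z})\mapsto\big(\sum_{i}\bm{a}_{1,i}^{T}\bm{x}_i+\bm{b}_1^{T}\bm{z},\ \sum_{i}\bm{a}_{2,i}^{T}\bm{x}_i+\bm{b}_2^{T}\bm{z}\big)=(R_1,R_2)$, the images $L(\mathcal{G})$ form a polytope (the linear image of a polytope), equal to the convex hull of the finitely many points $L(v)$ with $v$ a vertex of $\mathcal{G}$. The upper-right boundary of $\mathcal{R}_{\mathrm{UPAF}}$ is then the upper convex hull of these finitely many points, and together with the two axis segments supplied by down-closure it forms a closed polygonal chain; hence $\mathcal{R}_{\mathrm{UPAF}}$ is a polygon. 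Equivalently: $\mathcal{R}_{\mathrm{UPAF}}=\big(L(\mathcal{G})+\mathbb{R}^{2}_{\le 0}\big)\cap\mathbb{R}^{2}_{\ge 0}$, the down-closure of the polytope $L(\mathcal{G})$ is a polyhedron, the intersection with the nonnegative orthant keeps it bounded, and a bounded polyhedron in the plane is a polygon.

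The one place that needs care --- and the step I expect to be the main obstacle --- is precisely the identity $\mathcal{R}_{\mathrm{UPAF}}=\big(L(\mathcal{G})+\mathbb{R}^{2}_{\le 0}\big)\cap\mathbb{R}^{2}_{\ge 0}$, i.e. that (\ref{op_2}) describes exactly the achievable throughputs. For the ``$\supseteq$'' direction one realizes a feasible $(x_{\mathcal{U},n})$ by a genuine UPAF policy: split $W_n$ into pieces of sizes $x_{\emptyset,n}F,x_{\{1\},n}F,x_{\{2\},n}F,x_{\{1,2\},n}F$, place them uncoded as the subscripts dictate, and deliver by the index-coding scheme underlying (\ref{through1})--(\ref{through2}); this attains the point with every $z_{D_1,D_2}$ equal to the corresponding minimum, while the relaxed choices $z_{D_1,D_2}\le\min\{\cdot\}$ correspond to exploiting fewer coding pairs, which lowers $R_1$ and $R_2$ together and so stays in the down-closed region, any residual slack being absorbed by (\ref{left_bottom_point}) and by memory-sharing as in the proof of Theorem \ref{convexity}. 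For ``$\subseteq$'' one argues that a general UPAF policy has an (after file-splitting, still) uncoded placement corresponding to some feasible $(x_{\mathcal{U},n})$, and invokes the absolute-fairness constraint together with the optimality claim stated just after (\ref{through2}) --- all multicasting opportunities are already captured by the index code --- to conclude that the cost charged to user $k$ is at least $\mathbb{E}_{\bm D}\bar R_k(C_1,C_2,\bm D)$, hence $R_k\le R_k(C_1,C_2)$, while $R_k\ge 0$ because $\bar R_k$ never exceeds the uncoded baseline $\sum_{n}\hat d_{k,n}$. Making this correspondence fully rigorous --- in particular handling policies that interleave several placement patterns through memory-sharing --- is the delicate bookkeeping; once it is settled, the polytope-image argument above closes the proof.
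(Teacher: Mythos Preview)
Your proposal is correct and follows essentially the same line as the paper: both argue that the feasible region of the LP~(\ref{op_2}) is a polytope independent of~$\alpha$, so as $\alpha$ sweeps $[0,1]$ the optimum visits only finitely many vertices, yielding a piecewise-linear boundary and hence a polygon. Your treatment is in fact more careful than the paper's---you explicitly verify boundedness of the feasible set, phrase the conclusion via the linear image $L(\mathcal{G})$, and flag the correspondence between the LP and the true UPAF achievable region as the point needing justification, whereas the paper simply takes that correspondence for granted from the preceding discussion.
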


\begin{proof}
 To show the achievable domain under UPAF policies is a polygon, we only need to show  its boundary is piecewise linear.   The points in the boundary can be obtained by solving problem (\ref{op_2}).  Notice that the constraints of problem (\ref{op_2}) are linear and  are independent of $\alpha$. Therefore, the feasible domain of  problem (\ref{op_2}) remains unchanged with different values of $\alpha$. 
 
 According to the LP theory, the feasible domain of an LP problem is a  convex polytope and the optimal solution is a vertex of the convex polytope \cite{lptheory}.\footnote{If an LP problem has only one optimal solution, this solution must be a vertex of the convex polytope. If an LP problem has multiple optimal solutions, at least one of the optimal solutions is a vertex of the convex polytope.} Since the number of  vertices of a convex polytope is finite, problem (\ref{op_2}) at most achieves finitely many different optimal solutions when $\alpha$ goes from 0 to 1. As a result, the boundary can be characterized by finitely many points and therefore is piecewise linear.     
\end{proof}

The achievable domain gathers all the possible values of effective throughputs that can be achieved by possible placement and delivery policies, no matter it is centralized or decentralized. The higher the effective throughput a user obtains, the lower the transmission cost this user affords. The boundary of the achievable domain represents the Pareto-efficient  effective throughputs. If the users are selfish and each user only wants to maximize its own effective throughput, the users form  a  game relationship. In the next section, games are formulated to allocate caching gains for $(\bm{B},N,\mathcal{P})$-Caching with two users.

 \section{Games in $(\bm{B},N,\mathcal{P})$-Caching with Two Users}
 
In this section, a noncooperative game is formulated to investigate the equilibrium on effective throughputs for $(\bm{B},N,\mathcal{P})$-Caching with two users. Based on the noncooperative game, a cooperative game is studied to allocate the caching gains. Furthermore, a low-complexity algorithm is presented to provide a reasonable effective throughput allocation.
 
\subsection{Noncooperative  Game in $(\bm{B},N,\mathcal{P})$-Caching with Two Users}

In this subsection, we investigate a noncooperative game in $(\bm{B},N,\mathcal{P})$-Caching with two users. More specifically, we assume  that the two users  fill their buffers individually in the placement phase.  The BS satisfies the user requests in a manner that  the number of transmitted bits is minimized.  Each user wishes to maximize its own effective throughput from caching. It will be shown that the noncooperative game always has a mixed Nash equilibrium (NE). In addition, the noncooperative game has  pure strategy Nash equilibria (PSNEs) when the user preferences are similar.

  In the noncooperative game, the two users take the roles of players. The sets of bits cached in the user buffers in the placement phase, i.e., $C_1$ and $C_2$, act as strategies. Throughout this subsection, we  consider only UPAF policies. As a result, the strategy sets for this two users are given by $\mathcal{S}_1 =\{C_1 \subseteq \Omega : |C_1|\le b_1 F \} $ and $\mathcal{S}_2 =\{C_2 \subseteq \Omega : |C_2|\le b_2 F \} $. The payoffs for this two users are the effective throughputs resulting from caching and thus are presented in Eqs. (\ref{eff1}) and (\ref{eff2}), respectively. We denote the above noncooperative game as $G_0 = \{\mathcal{S}_1,\mathcal{S}_2;R_1,R_2\}$.

Nash's existence theorem guarantees that the noncooperative game in $(\bm{B},N,\mathcal{P})$-Caching with two users at least has a mixed NE \cite{nash}. 
\begin{theorem}\label{existence}
    $G_0$ has a mixed NE. 
\end{theorem}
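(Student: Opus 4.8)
The plan is to invoke Nash's existence theorem, which guarantees that every finite strategic-form game (finitely many players, finite strategy sets, arbitrary real payoffs) possesses at least one Nash equilibrium in mixed strategies. Since $G_0$ has exactly two players, the only thing that needs to be verified is that the strategy sets $\mathcal{S}_1$ and $\mathcal{S}_2$ can be taken to be finite, and that the payoff functions $R_1, R_2$ are well-defined real numbers on the resulting finite game. First I would argue that, although $\mathcal{S}_k = \{C_k \subseteq \Omega : |C_k| \le b_k F\}$ is written as a family of subsets of the bit-collection $\Omega$, for a fixed content size $F$ the set $\Omega$ is finite (it consists of $NF$ bits), so each $\mathcal{S}_k$ is a finite collection. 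Hence $G_0 = \{\mathcal{S}_1, \mathcal{S}_2; R_1, R_2\}$ is a finite two-player game.

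Next I would check that the payoffs are genuine real-valued functions on $\mathcal{S}_1 \times \mathcal{S}_2$. For any pair of uncoded buffer states $(C_1, C_2)$, the quantities $R_1(C_1, C_2)$ and $R_2(C_1, C_2)$ defined in Eqs.~(\ref{eff1})--(\ref{eff2}) are finite: they are the (finite) sums $\sum_{n\in[N]} p_{k,n}$ minus an expectation over the random demand matrix $\bm D$ of the minimum-cost delivery expressions $\bar R_k(C_1,C_2,\bm D)$ from Eqs.~(\ref{re_through1})--(\ref{re_through2}), and the latter are bounded (each $x_{\mathcal{U},n}\le 1$, and $\bm D$ ranges over the finite support $\{0,1\}^{K\times N}$), so the expectation is a finite convex combination of bounded terms. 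Thus the map $(C_1,C_2)\mapsto (R_1,R_2)$ is a well-defined payoff profile on a finite game.

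With these two observations in place, Nash's existence theorem \cite{nash} applies directly: there is a pair of probability distributions $(\sigma_1,\sigma_2)$ over $\mathcal{S}_1,\mathcal{S}_2$ respectively such that neither user can strictly increase its expected effective throughput $\mathbb{E}_{\sigma_1,\sigma_2}[R_k]$ by a unilateral deviation. This $(\sigma_1,\sigma_2)$ is a mixed NE of $G_0$, which is exactly the claim.

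I do not expect a genuine obstacle here — the statement is essentially a bookkeeping corollary of Nash's theorem. The only subtlety worth a sentence is the finiteness of $\mathcal{S}_k$: the achievable-domain formalism elsewhere in the paper lets $F\to\infty$ and uses an $\varepsilon$-error notion, so one must be careful to state the game $G_0$ for a fixed (large but finite) $F$, over which $\Omega$ and hence the strategy sets are finite; the mixed equilibrium then exists for each such $F$. If one instead wished to work with the ``fractional'' description via the variables $x_{\mathcal{U},n}\in[0,1]$, the strategy sets would be compact convex and the payoffs continuous (indeed piecewise-linear, being minima of linear functions), so Debreu--Glicksberg--Fan would give a pure-strategy equilibrium in the relaxed game; but for the theorem as stated the finite-game version of Nash suffices, and that is the route I would take.
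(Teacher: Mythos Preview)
Your proposal is correct and follows essentially the same route as the paper: verify that $G_0$ is a finite game (two players, and $|\mathcal{S}_k|\le \binom{NF}{b_kF}$ since $\Omega$ has $NF$ bits for fixed $F$), then invoke Nash's existence theorem. The paper's proof is terser and omits the check that payoffs are well-defined reals and the remark about fixing $F$, but the argument is the same.
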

\begin{proof}
    $G_0$ has finitely many players. In addition, the strategy sets are finite, i.e., $|\mathcal{S}_k|\le \binom{NF}{b_kF}$ for $k=1,2$. Thus $G_0$ is a finite game. According to Nash's existence theorem, $G_0$ has a mixed NE. 
\end{proof}
A pure strategy is a bit-by-bit decision over the content items. However, a mixed NE needs not to divide a bit into smaller parts. Instead, a mixed NE chooses pure strategies according to a certain distribution.  Having proved the existence of a mixed NE, we pay attention to PSNEs.  However,  it is computationally prohibitive to find a PSNE and corresponding payoffs for $G_0$, due to the fact that the strategy sets are of exponential  sizes. To overcome that, we construct an infinite game based on $G_0$.

Note that the strategies $C_1$ and $C_2$ can be completely characterized by $\bm{x}_1, $ $\bm{x}_2, $ and $\bm{x}_3$.  The payoff functions in $G_0$ can be rewritten as 
\begin{eqnarray}\label{throughput1}
    R_k(\bm{x}_1, \bm{x}_2, \bm{x}_3,\bm{z})&=&\sum_{i=1}^{3} \bm{a}_{k,i}^T \bm{x}_{i} + \bm{b}_k^T \bm{z} .
\end{eqnarray}
Let us consider a two-player infinite game $G_1 = \{\mathcal{E}_1, \mathcal{E}_2; f_1,f_2\}$. The strategy sets are feasible domains of $\bm{x}_1$ and $\bm{x}_2$, i.e., $\mathcal{E}_1 = \{\bm{x}\in R^N:\bm{0}\le \bm{x}\le \bm{1}, \sum_{n=1}^Nx_n \le b_1 \}$ and $\mathcal{E}_2 = \{\bm{x}\in R^N:\bm{0}\le \bm{x}\le \bm{1}, \sum_{n=1}^Nx_n \le b_2 \}$. The payoff function $f_k$ is defined as 
\begin{equation}
\begin{split}
    f_k (\bm{x}_1, \bm{x}_2) &= \max_{\bm{x}_3,\bm{z}} R_k(\bm{x}_1, \bm{x}_2, \bm{x}_3,\bm{z})\\
    &=   \bm{a}_{k,1}^T \bm{x}_1  +  \bm{a}_{k,2}^T \bm{x}_2 + g_k(\bm{x}_1, \bm{x}_2) ,
\end{split}
\end{equation}
where 
\begin{equation}\label{semipayoff}
    \begin{split}
    g_k(\bm{x}_1,\bm{x}_2) = &\max_{\bm{x}_3,\bm{z}} \bm{a}_{k,3}^T\bm{x}_3 + \bm{b}_k^T\bm{z}\\
    &\text{s.t. } \bm{A}_3\bm{x}_3 + \bm{Bz}\le \bm{h} - \bm{A}_1\bm{x}_1-\bm{A}_2\bm{x}_2,
    \end{split}
\end{equation}
where the constraint is equivalent to the one in problem (\ref{op_2}). 
     One can see that problem (\ref{semipayoff}) is also an LP problem.  The basic idea to formulate $G_1$ is as follows. Each user decides the number of bits  cached in its own buffer. Thus $\bm{x}_1$ and $\bm{x}_2$ act as strategies. When the values of $\bm{x}_1$ and $\bm{x}_2$ are selected, user 1 is granted the privilege to  maximize its own effective throughput by adjusting the values of $\bm{x}_3$ and $\bm{z}$ and then yields the payoff function $f_1(\bm{x}_1,\bm{x}_2)$. Similarly, $f_2(\bm{x}_1,\bm{x}_2)$ can be derived. 

Lemma \ref{lemma1}  presents the relationship between the existences of PSNEs of $G_0$ and $G_1$. 

\begin{lemma}\label{lemma1}
    If $G_1$ has a PSNE $(\bm{x}_1^*, \bm{x}_2^*)$ and there exist $\bm{x}_3^*$ and $\bm{z}^*$ satisfying $g_k (\bm{x}_1^*, \bm{x}_2^*) = \bm{a}_{k,3}^T\bm{x}_3^* + \bm{b}_k^T\bm{z}^*$, 
     $G_0$ has a PSNE.
\end{lemma}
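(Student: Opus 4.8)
The plan is to show that a pure strategy Nash equilibrium $(\bm{x}_1^*,\bm{x}_2^*)$ of the infinite game $G_1$, together with the promised optimizers $\bm{x}_3^*,\bm{z}^*$, induces (after clearing denominators so that all coordinates become integer multiples of $1/F$ for $F$ large) a feasible uncoded placement $(C_1^*,C_2^*)$ for $G_0$, and that this placement is a PSNE of $G_0$. The first step is to translate back from variables to sets: given $\bm{x}_1^*,\bm{x}_2^*,\bm{x}_3^*$ one reconstructs $x_{\{1,2\},n}^* = (\bm{x}_3^*)_n$, $x_{\{1\},n}^* = (\bm{x}_1^*)_n-(\bm{x}_3^*)_n$, $x_{\{2\},n}^* = (\bm{x}_2^*)_n-(\bm{x}_3^*)_n$, and these are nonnegative and obey the buffer constraints precisely because $(\bm{x}_1^*,\bm{x}_2^*)\in\mathcal{E}_1\times\mathcal{E}_2$ and $\bm{x}_3^*,\bm{z}^*$ satisfy the constraint in (\ref{semipayoff}); pick for each $n$ a sub-collection of the $F$ bits of $W_n$ of the prescribed cardinalities, which is possible for $F$ divisible by a common denominator. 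The hypothesis $g_k(\bm{x}_1^*,\bm{x}_2^*) = \bm{a}_{k,3}^T\bm{x}_3^* + \bm{b}_k^T\bm{z}^*$ for \emph{both} $k$ is exactly what guarantees the delivery phase simultaneously attains the optimum for each user, so that the payoff of user $k$ in $G_0$ at $(C_1^*,C_2^*)$ equals $f_k(\bm{x}_1^*,\bm{x}_2^*)$.

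The second step is the equilibrium verification: I must show no user can profitably deviate in $G_0$. Fix user $1$ and any alternative placement $C_1'\in\mathcal{S}_1$, with associated vectors $\bm{x}_1',\bm{x}_3'$. Holding $C_2^*$ fixed, the best effective throughput user $1$ can obtain against $C_2^*$ is $R_1(C_1',C_2^*)$, and by (\ref{eff1}) and the definition of $\bar R_1$ — which already incorporates the optimal index-coded delivery — this is at most $\max_{\bm{x}_3,\bm{z}}R_1(\bm{x}_1',\bm{x}_2^*,\bm{x}_3,\bm{z}) = f_1(\bm{x}_1',\bm{x}_2^*)$. Here one has to be slightly careful: in $G_0$ user $1$ deviating changes $C_1$, hence changes which delivery messages the BS sends, but the BS always minimizes the number of transmitted bits, so user $1$'s realized cost is $\mathbb{E}_{\bm D}\bar R_1(C_1',C_2^*,\bm D)$, and the passage from sets to the $x_{\mathcal{U},n}$ variables followed by the LP relaxation bounds this above by $f_1(\bm{x}_1',\bm{x}_2^*)$ — the point being that any feasible set-placement gives feasible continuous variables, so the set-optimum cannot beat the LP-optimum. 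Since $(\bm{x}_1^*,\bm{x}_2^*)$ is a PSNE of $G_1$, $f_1(\bm{x}_1',\bm{x}_2^*)\le f_1(\bm{x}_1^*,\bm{x}_2^*) = R_1(C_1^*,C_2^*)$, so the deviation is not profitable; the symmetric argument handles user $2$.

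The main obstacle I anticipate is making the reduction from continuous strategies in $G_1$ to discrete bit-level strategies in $G_0$ airtight — in particular, showing that a deviation available to a player in $G_0$ is always ``dominated'' by some continuous strategy in $G_1$ (so that $G_1$-optimality implies $G_0$-optimality), while the reverse inclusion is only needed for the equilibrium point itself and is supplied by the lemma's hypothesis on $\bm{x}_3^*,\bm{z}^*$. This asymmetry — an inequality $R_k^{G_0\text{-deviation}}\le f_k$ always, but equality $R_k(C_1^*,C_2^*) = f_k(\bm{x}_1^*,\bm{x}_2^*)$ only at the equilibrium — is the crux, and it rests on (i) $\bar R_k$ being the \emph{minimum} delivery cost (stated just after (\ref{re_through1})--(\ref{re_through2})), and (ii) the equivalence between problem (\ref{op_1}) and the LP (\ref{op_2}), so that the continuous relaxation introduces no slack for the integer (bit-level) placements when $F$ is a large enough multiple of the relevant denominators. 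I would also note explicitly that all of this is up to the $\varepsilon$-and-large-$F$ convention in the definition of achievability, which absorbs the rounding of $x_{\mathcal{U},n}$ to multiples of $1/F$.
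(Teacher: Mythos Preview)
Your proof is correct and follows essentially the same approach as the paper: use the hypothesis to get $R_k(\bm{x}_1^*,\bm{x}_2^*,\bm{x}_3^*,\bm{z}^*)=f_k(\bm{x}_1^*,\bm{x}_2^*)$, then chain the PSNE inequality for $G_1$ with $f_k\ge R_k$ for arbitrary deviations, and translate back to a set placement $(C_1^*,C_2^*)$. In fact you are more careful than the paper about the continuous-to-discrete rounding and about articulating the asymmetry (equality at the equilibrium, inequality on deviations), both of which the paper's proof leaves implicit.
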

\begin{proof}
 Suppose $(\bm{x}_1^*,\bm{x}_2^*)$ is a PSNE of $G_1.$ We have 
 \begin{eqnarray}
     f_1(\bm{x}_1^*,\bm{x}_2^*)  &\ge &  f_1(\bm{x}_1,\bm{x}_2^*)  , \text{ for }\bm{x}_1 \in \mathcal{E}_1,\\ 
     f_2(\bm{x}_1^*,\bm{x}_2^*)  &\ge &  f_2(\bm{x}_1^*,\bm{x}_2)  , \text{ for }\bm{x}_2 \in \mathcal{E}_2.
           \end{eqnarray}
           The condition $g_k (\bm{x}_1^*, \bm{x}_2^*) = \bm{a}_{k,3}^T\bm{x}_3^* + \bm{b}_k^T\bm{z}^*$ implies that problem (\ref{semipayoff}) has the same optimal solution $(\bm{x}_3^*,\bm{z}^*)$ for both $k=1$ and $k=2$.
 Thus, we have
  \begin{eqnarray}
  \label{nashcond1}
 R_1(\bm{x}_1^*,\bm{x}_2^*,\bm{x}_3^*,\bm{z}^*)  &\ge &  f_1(\bm{x}_1,\bm{x}_2^*) \ge R_1(\bm{x}_1,\bm{x}_2^*,\bm{x}_3,\bm{z})  , \\ 
 \label{nashcond2}
 R_2(\bm{x}_1^*,\bm{x}_2^*,\bm{x}_3^*,\bm{z}^*)  &\ge &  f_2(\bm{x}_1^*,\bm{x}_2) \ge R_2(\bm{x}_1^*,\bm{x}_2,\bm{x}_3,\bm{z}). 
 \end{eqnarray}
 Let us consider $(C_1^*,C_2^*)$ satisfying  $\frac{C_1^*\cap W_n}{F} = x_{1,n}^*, \frac{C_2^*\cap W_n}{F} = x_{2,n}^*,$ and $\frac{C_1^*\cap C_2^* \cap W_n}{F} = x_{3,n}^*$. Eqs. (\ref{nashcond1}) and (\ref{nashcond2}) can be written as 
  \begin{eqnarray}
R_1(C_1^*,C_2^*)  &\ge &  R_1(C_1,C_2^*)  , \\ 
R_2(C_1^*,C_2^*)  &\ge &   R_2(C_1^*,C_2). 
\end{eqnarray}
Thus $(C_1^*,C_2^*)$ is a PSNE of $G_0$.     
\end{proof}

Lemma \ref{lemma1} reveals that $G_0$ has a PSNE if $G_1$ has a PSNE and the two users adopt the same pair values of  $\bm{x}_3$ and $\bm{z}$. By applying Debreu's theorem \cite{debreu}, we can prove that $G_1$ has a PSNE. 

\begin{lemma}\label{lemma2}
    $G_1$ has a PSNE. 
\end{lemma}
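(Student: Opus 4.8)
The plan is to verify that $G_1$ satisfies the hypotheses of Debreu's theorem on concave games: a finite-player game in which every player's strategy set is a nonempty, compact, convex subset of a Euclidean space, and every player's payoff is continuous on the product of the strategy sets and quasi-concave in that player's own strategy, admits a pure strategy Nash equilibrium. The strategy sets are the easy part: each $\mathcal{E}_k = \{\bm{x}\in\mathbb{R}^N:\bm{0}\le\bm{x}\le\bm{1},\ \sum_{n=1}^N x_n\le b_k\}$ is an intersection of finitely many closed half-spaces, is bounded, and contains $\bm{0}$, hence is a nonempty compact convex polytope. So the real work lies in the continuity and quasi-concavity of the payoffs $f_1,f_2$.

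Write $f_k(\bm{x}_1,\bm{x}_2)=\bm{a}_{k,1}^T\bm{x}_1+\bm{a}_{k,2}^T\bm{x}_2+g_k(\bm{x}_1,\bm{x}_2)$, where $g_k$ is the optimal value of the linear program (\ref{semipayoff}), whose right-hand side $\bm{h}-\bm{A}_1\bm{x}_1-\bm{A}_2\bm{x}_2$ is affine in $(\bm{x}_1,\bm{x}_2)$. I would first check that this LP is feasible and bounded above for every $(\bm{x}_1,\bm{x}_2)\in\mathcal{E}_1\times\mathcal{E}_2$: a feasible point is given by the explicit completion $x_{\{1,2\},n}=\max\{0,\ x_{1,n}+x_{2,n}-1\}$, $x_{\{1\},n}=x_{1,n}-x_{\{1,2\},n}$, $x_{\{2\},n}=x_{2,n}-x_{\{1,2\},n}$, $x_{\emptyset,n}=1-x_{1,n}-x_{2,n}+x_{\{1,2\},n}$ (all nonnegative, summing to $1$ over $\mathcal{U}$ for each $n$, and respecting both buffer bounds), together with $\bm{z}$ set at its defining minima; the objective is bounded above because the $\bm{x}_3$ components lie in $[\bm{0},\bm{1}]$ while the $\bm{z}$ components are capped by the constraints (\ref{zcon1})--(\ref{zcon2}). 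Hence $g_k$ is finite on all of $\mathcal{E}_1\times\mathcal{E}_2$. Continuity of $f_k$ then follows from the standard fact that the optimal value of a linear program depends continuously (indeed piecewise-linearly) on its right-hand side wherever it is finite, composed with the affine map $(\bm{x}_1,\bm{x}_2)\mapsto\bm{h}-\bm{A}_1\bm{x}_1-\bm{A}_2\bm{x}_2$; one may also obtain this from Berge's maximum theorem, since (after appending the redundant bounds $\bm{z}\ge\bm{0}$, which leave the optimum unchanged) the polyhedral feasible-set correspondence is nonempty, compact-valued, and both upper and lower hemicontinuous in $(\bm{x}_1,\bm{x}_2)$. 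For quasi-concavity, I would use that for a fixed cost vector and constraint matrix the function $\bm{t}\mapsto\max\{\bm{c}^T\bm{w}:\bm{M}\bm{w}\le\bm{t}\}$ is concave in $\bm{t}$ (a convex combination of optimizers for $\bm{t}_1$ and $\bm{t}_2$ is feasible for the corresponding convex combination of $\bm{t}$'s); composing with the affine map in $(\bm{x}_1,\bm{x}_2)$ and adding the linear terms $\bm{a}_{k,1}^T\bm{x}_1+\bm{a}_{k,2}^T\bm{x}_2$ preserves concavity. Thus each $f_k$ is jointly concave in $(\bm{x}_1,\bm{x}_2)$, and in particular concave — hence quasi-concave — in player $k$'s own variable. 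Debreu's theorem then yields a pure strategy Nash equilibrium of $G_1$, proving the lemma.

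I expect the main obstacle to be the continuity claim for $g_k$ at the boundary of the strategy polytopes: one must rule out that the LP (\ref{semipayoff}) degenerates — becomes infeasible, or that its concave value function drops discontinuously — as $(\bm{x}_1,\bm{x}_2)$ approaches $\partial(\mathcal{E}_1\times\mathcal{E}_2)$. The explicit feasible completion above and the a priori bounds on $\bm{x}_3$ and $\bm{z}$ are precisely what close this gap; once feasibility and upper boundedness hold uniformly on the closed strategy domain, parametric linear-programming stability finishes the argument.
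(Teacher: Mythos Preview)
Your proof is correct and follows essentially the same route as the paper: verify the hypotheses of Debreu's theorem by checking that each $\mathcal{E}_k$ is a nonempty compact convex polytope and that each $f_k$ is continuous and quasi-concave. Your treatment is in fact more careful than the paper's, which simply asserts continuity and appeals to ``properties of quasi-concavity'' for the partial maximization; your explicit feasibility construction, boundedness check, and concavity-of-the-LP-value-in-the-right-hand-side argument fill in exactly the details the paper leaves implicit.
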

\begin{proof}
    According to Debreu's theorem, we only need to show that the strategy sets are nonempty convex compact subsets of an Euclidean space and the payoff functions are continuous and quais-concave. 
    
    The sets $\mathcal{E}_1$ and $\mathcal{E}_2$ are bounded and closed in $R^N$ and thus are also compact. The payoff functions are maximums of a series of linear functions:
    \begin{equation}
  f_k (\bm{x}_1, \bm{x}_2) = \max_{\bm{x}_3,\bm{z}} \sum_{i=1}^{3} \bm{a}_{k,i}^T \bm{x}_{i} + \bm{b}_k^T \bm{z} .
    \end{equation}
    It is seen that $f_k (\bm{x}_1, \bm{x}_2) $ is continuous. To prove the quais-concavity of $f_k (\bm{x}_1, \bm{x}_2) $, 
     we only need to show $\sum_{i=1}^{3} \bm{a}_{k,i}^T \bm{x}_{i} + \bm{b}_k^T \bm{z}$ is quais-concave, according to the properties of quais-concavity. As a linear function, $\sum_{i=1}^{3} \bm{a}_{k,i}^T \bm{x}_{i} + \bm{b}_k^T \bm{z}$ is quais-concave. 
\end{proof}


Based on Lemmas \ref{lemma1} and \ref{lemma2}, we have the following theorem. 

\begin{theorem}\label{psne}
    Given any $\bm{B}$ and $N$, there exists a positive number $\varepsilon $ such that if  $||\bm{a}_{1,3}-\bm{a}_{2,3}|| + ||\bm{b}_1-\bm{b}_2||\le \varepsilon$,  $G_0$ has PSNEs and the PSNEs are not unique. 
\end{theorem}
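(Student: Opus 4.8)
The plan is to leverage Lemmas~\ref{lemma1} and~\ref{lemma2}: Lemma~\ref{lemma2} already guarantees that $G_1$ has a PSNE $(\bm{x}_1^*,\bm{x}_2^*)$, so by Lemma~\ref{lemma1} it suffices to show that, when $\bm{a}_{1,3}$ is close to $\bm{a}_{2,3}$ and $\bm{b}_1$ close to $\bm{b}_2$, there exist $\bm{x}_3^*$ and $\bm{z}^*$ that simultaneously attain $g_1(\bm{x}_1^*,\bm{x}_2^*)$ and $g_2(\bm{x}_1^*,\bm{x}_2^*)$. That is, the two inner LPs in~(\ref{semipayoff}) — which share the same feasible polytope $\mathcal{F}(\bm{x}_1^*,\bm{x}_2^*)=\{(\bm{x}_3,\bm{z}):\bm{A}_3\bm{x}_3+\bm{Bz}\le\bm{h}-\bm{A}_1\bm{x}_1^*-\bm{A}_2\bm{x}_2^*\}$ and differ only in their objective vectors — must have a common optimizer. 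First I would observe that when $\bm{a}_{1,3}=\bm{a}_{2,3}$ and $\bm{b}_1=\bm{b}_2$ the two inner LPs are literally identical, so a common optimizer trivially exists; the content of the theorem is that this persists under a small perturbation. The natural tool is LP stability: the set of objective vectors for which a fixed vertex $v$ of the polytope $\mathcal{F}$ is optimal is the normal cone at $v$, which is a full-dimensional polyhedral cone whenever $v$ is a nondegenerate vertex that is the unique optimum for at least one objective. So if the common objective $\bm{a}_{1,3}=\bm{a}_{2,3}$, $\bm{b}_1=\bm{b}_2$ lies in the interior of such a normal cone at some vertex $v^*$, then for all $(\bm{a}_{1,3},\bm{b}_1)$ and $(\bm{a}_{2,3},\bm{b}_2)$ within a small enough ball, $v^*$ remains optimal for \emph{both} objectives, giving the common optimizer $(\bm{x}_3^*,\bm{z}^*)=v^*$. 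Picking $\varepsilon$ to be (twice) the radius of that ball finishes the existence part.

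The main obstacle is the degenerate case: the shared objective might lie on a \emph{face} of several normal cones rather than in the interior of one, i.e., the inner LP might have a whole face of optimizers and no single vertex robust to perturbation. I would handle this by a perturbation/continuity argument on the outer game. The outer payoffs $f_k(\bm{x}_1,\bm{x}_2)=\bm{a}_{k,1}^T\bm{x}_1+\bm{a}_{k,2}^T\bm{x}_2+g_k(\bm{x}_1,\bm{x}_2)$ are continuous in the data (as value functions of LPs with right-hand side depending continuously on $\bm{x}_1,\bm{x}_2$), and when the objective coefficients coincide we have $f_1\equiv f_2$ up to the first two linear terms; the hypothesis $\|\bm{a}_{1,3}-\bm{a}_{2,3}\|+\|\bm{b}_1-\bm{b}_2\|\le\varepsilon$ makes $f_1$ and $f_2$ uniformly close on the compact set $\mathcal{E}_1\times\mathcal{E}_2$. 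One shows that the PSNE produced by Lemma~\ref{lemma2} can be chosen so that the inner LP for (say) user~1 has a vertex optimum $v^*$, and then argues that the small gap between the two objective vectors keeps $v^*$ within the optimal face for user~2's objective as well — using that the objective difference $\langle(\bm{a}_{1,3}-\bm{a}_{2,3},\bm{b}_1-\bm{b}_2),\cdot\rangle$ varies by at most $O(\varepsilon)$ over the (bounded) polytope $\mathcal{F}$, so any $\varepsilon$-optimal point for one objective is $O(\varepsilon)$-optimal for the other. To get an \emph{exact} common optimizer rather than an approximate one, I would combine this with the finiteness of the vertex set of $\mathcal{F}(\bm{x}_1^*,\bm{x}_2^*)$: among finitely many vertices, the one maximizing user~1's objective also maximizes user~2's once $\varepsilon$ is below the minimal objective-gap between distinct vertices, which is bounded away from zero for the unperturbed (coincident) objective in the nondegenerate regime; the degenerate regime is then isolated and handled by choosing the PSNE on the appropriate sub-vertex.

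For the non-uniqueness claim, I would argue that any PSNE of $G_0$ can be perturbed within the polytope. Concretely, at a PSNE $(C_1^*,C_2^*)$ corresponding to $(\bm{x}_1^*,\bm{x}_2^*,\bm{x}_3^*,\bm{z}^*)$, there is slack in the inner optimization (the optimal face of~(\ref{semipayoff}) has positive dimension, because the coincident-objective LP generically has non-unique optima on the shared polytope, and this persists for small $\varepsilon$): shifting $\bm{x}_3$ along an edge of the optimal face changes neither $R_1$ nor $R_2$, so it produces a distinct buffer configuration $(C_1',C_2')$ with identical payoffs, which is therefore also a PSNE. Alternatively, and more cleanly, I would exhibit two structurally different equilibria directly — e.g., one where user~1's cache is a strict superset pattern of user~2's on the overlap and one where the overlap is distributed differently — both satisfying the Nash inequalities because near the symmetric-preference point the best response of each user is essentially "cache the most-probable content items," a condition satisfied by a continuum (hence more than one) of overlap allocations. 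I would then fix $\varepsilon$ as the minimum of the thresholds required by the existence argument and by the non-uniqueness argument. The delicate point throughout is making the LP-stability estimates uniform over the \emph{outer} strategy profile $(\bm{x}_1,\bm{x}_2)\in\mathcal{E}_1\times\mathcal{E}_2$ — i.e., that a single $\varepsilon$, depending only on $\bm{B}$ and $N$, works simultaneously at every point of a compact set — which follows from continuity of the LP value and basis maps together with compactness, but deserves to be spelled out carefully.
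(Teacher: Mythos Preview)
Your existence argument is essentially the paper's: both start from Lemmas~\ref{lemma1} and~\ref{lemma2} and then argue that the two inner LPs in~(\ref{semipayoff}), which share a feasible polytope and differ only in their objective vectors $(\bm{a}_{k,3},\bm{b}_k)$, have a common optimal vertex once those vectors are close enough. Your normal-cone phrasing is more explicit than the paper's one-line geometric remark, and you flag the degeneracy and uniformity-over-$(\bm{x}_1,\bm{x}_2)$ issues that the paper leaves implicit, but the underlying idea is the same.

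Your non-uniqueness argument, however, takes a different route from the paper and contains a genuine gap. You claim that the optimal face of~(\ref{semipayoff}) has positive dimension ``because the coincident-objective LP generically has non-unique optima.'' This is backwards: for a generic objective vector an LP has a \emph{unique} vertex optimum, so there is no edge of the optimal face along which to slide $\bm{x}_3$. Your alternative (``the best response is `cache the most-probable items', satisfied by a continuum of overlap allocations'') is really the same claim in different words and has the same problem---if the inner LP pins down $\bm{x}_3^*$ uniquely, the overlap allocation is not free. At best you would recover a trivial bit-relabeling non-uniqueness in $G_0$ (distinct $C_1,C_2$ with identical $(\bm{x}_1,\bm{x}_2,\bm{x}_3)$), which is not what you argue.

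The paper instead works through the \emph{dual} of~(\ref{semipayoff}): writing $f_k = \bm{a}_{k,1}^T\bm{x}_1 + \bm{a}_{k,2}^T\bm{x}_2 - \bm{\lambda}_k^T(\bm{A}_1\bm{x}_1+\bm{A}_2\bm{x}_2-\bm{h})$ with $\bm{\lambda}_k$ the optimal dual variable, it observes that $\bm{\lambda}_k$ is a \emph{piecewise constant} (step) function of $(\bm{x}_1,\bm{x}_2)$. The first-order PSNE conditions $\partial f_1/\partial \bm{x}_1=0$, $\partial f_2/\partial \bm{x}_2=0$ then reduce to $\bm{A}_1^T\bm{\lambda}_1=\bm{a}_{1,1}$ and $\bm{A}_2^T\bm{\lambda}_2=\bm{a}_{2,2}$, which---because $\bm{\lambda}_k$ is constant on full-dimensional cells---hold on an entire region of $(\bm{x}_1,\bm{x}_2)$ rather than at an isolated point. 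That is where the multiplicity of PSNEs of $G_1$ (and hence of $G_0$) comes from: it lives in the \emph{outer} variables, not in slack in the inner optimizer. This dual/parametric-LP mechanism is the piece your proposal is missing.
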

\begin{proof}
According to  Lemmas \ref{lemma1} and \ref{lemma2}, we only need to show that  problem (\ref{semipayoff}) has the same optimal solution for $k=1$ and $k=2$ at the PSNE of $G_1$, when $||\bm{a}_{1,3}-\bm{a}_{2,3}|| + ||\bm{b}_1-\bm{b}_2||$ is small enough. Notice that the feasible domain of problem (\ref{semipayoff}) remains unchanged with different values of $k$. The objective function of problem (\ref{semipayoff}) describes a group of hyperplanes with the same normal vector $(\bm{a}_{k,3},\bm{b}_k)$. The optimal solution happens to be the   intersection point of the feasible domain and a certain hyperplane \cite{lptheory}. If the normal vectors for $k=1$ and $k=2$ are close enough, problem (\ref{semipayoff}) achieves its optimization value at the same vertex of its feasible domain. 
 Thus $G_0$ has a PSNE when the difference between $(\bm{a}_{1,3},\bm{b}_1)$ and $(\bm{a}_{2,3},\bm{b}_2)$ is small.

Having proved the existence of PSNEs, we now show its non-uniqueness. Let us consider the dual problem of problem (\ref{semipayoff}):
\begin{equation}\label{dual}
\begin{split}
\min_{\bm{\lambda}\ge \bm{0}}\,\,  & \left(\bm{h} - \bm{A}_1\bm{x}_1-\bm{A}_2\bm{x}_2 \right)^T\bm{\lambda} \\
\text{s.t. } &\bm{A}_3^T\bm{\lambda} = \bm{a}_{k,3},\\
& \bm{B}^T\bm{\lambda} = \bm{b}_{k}.
\end{split}
\end{equation}
Problem (\ref{dual}) is also an LP problem. Let $\bm{\lambda}_k$ denote the optimal solution of problem (\ref{dual}). Then the payoff functions can be written as $f_k=\bm{a}_{k,1}^T \bm{x}_1  +  \bm{a}_{k,2}^T \bm{x}_2 - \bm{\lambda}^T_k( \bm{A}_1\bm{x}_1+\bm{A}_2\bm{x}_2-\bm{h} )$. According to the LP theory,  $\bm{\lambda}_k$ is a  step function with respect to $\bm{x}_1$ and $\bm{x}_2$ \cite{lptheory}. In other words, we have 
\begin{eqnarray}\label{part1}
    \frac{\partial \bm{\lambda}_k}{\partial \bm{x}_1}=\bm{0},\\
    \label{part2}
       \frac{\partial \bm{\lambda}_k}{\partial \bm{x}_2}=\bm{0},    
\end{eqnarray}
almost everywhere. 

Suppose $(\bm{x}_1^*,\bm{x}_2^*)$ is a PSNE of $G_1$. We require and only require
\begin{eqnarray}
\label{necond1}
\left.\frac{\partial f_1}{\partial \bm{x}_1}\right | _{(\bm{x}_1^*,\bm{x}_2^*)}=\bm{0},\\
\label{necond2}
\left.\frac{\partial f_2}{\partial \bm{x}_2}\right | _{(\bm{x}_1^*,\bm{x}_2^*)}=\bm{0}.
\end{eqnarray}
Substituting Eqs. (\ref{part1})-(\ref{part2}) into Eqs. (\ref{necond1})-(\ref{necond2}) yields 
\begin{eqnarray}
\bm{A}_1^T\bm{\lambda}_1=\bm{a}_{1,1},\\
\bm{A}_2^T\bm{\lambda}_2=\bm{a}_{2,1},
\end{eqnarray}
which are  sufficient and necessary conditions for PSNEs.
Since $\bm{\lambda}_k$ is a step function of $\bm{x}_1$ and $\bm{x}_2$, $G_1$ have more than one PSNE, which further certifies that $G_0$ has more than one PSNE. 

\end{proof}

 Since the normal vectors of the hyperplanes are only related to the user preferences, Theorem \ref{psne} implies that $G_0$ has PSNEs when the two users' preferences are similar enough. The buffer size vector has no effect on the existence of PSNEs of $G_0$. It is worth noting that Theorem \ref{psne} does not claim that $G_0$ has no PSNEs when the two users' preferences differ much.  Instead, simulations demonstrate that $G_0$ has PSNEs at most cases. 
 
Note that the proof of the existence of NE is nonconstructive. Now we pay attention to how to calculate a NE. Unfortunately, it have been proved that calculating a NE is PPAD-complete, which is a subclass of NP \cite{complexity}. The exponential size of the strategy sets also implies the difficulty of finding a NE. Recall that the core condition that $(\bm{x}_1,\bm{x}_2)$ forms a PSNE is that the two users reach a consensus on the value of $\bm{x}_3$ ($\bm{z}$ is a function of $\bm{x}_1, \bm{x}_2,$ and $\bm{x}_3$). Based on best response dynamics, we present Algorithm \ref{algo1} to give a PSNE with high probability. The core idea behind Algorithm \ref{algo1} is to update the two users' caching schemes, i.e., $\bm{x}_1$ and $\bm{x}_2$, alternately. Once the caching schemes converge and the two users reach the  the same $\bm{x}_3$ in Steps 4 and 5, a PSNE is found and the program terminates. 
 The parameter $T$ is a positive integer that limits the maximum number of iterations. 
 
 Even though simulations reveal that Algorithm \ref{algo1} can return a PSNE at most cases, it is likely that no PSNE is found when the program terminates. That might be  because $G_0$ does not have a PSNE,  a bad initialization point is chosen, the number of iterations is not large enough, or other reasons. There is a balance between the probability that Algorithm \ref{algo1} finds a PSNE and the computational complexity. Higher $T$ will increase the probability and computational complexity simultaneously. 
 
\begin{algorithm}[t]
    \caption{Find a PSNE for $(\bm{B},N,\mathcal{P})$-Caching}
    \label{algo1}
    \renewcommand{\algorithmicrequire}{ \textbf{Input:}} 
    \renewcommand{\algorithmicensure}{ \textbf{Output:}} 
    \begin{algorithmic}[1]
        \REQUIRE   $\bm{B},N, \mathcal{P}, T, \varepsilon$
        \ENSURE PSNE $(\bm{x}_1^*,\bm{x}_2^*,\bm{x}_3^*)$
        \STATE Calculate $\bm{a}_{k,i}$ and $\bm{b}_k$ according to $\bm{B},N, \mathcal{P}$;      
        \STATE Randomly initialize $\bm{x}_1^{*}$ such that $\bm{x}_1^{*} \ge 0$ and $\sum_{n=1}^{N}x_{1,n}^{*}\le b_1$;
        \FOR{$t=1$  to $T$ } 
        \STATE Maximize $R_2(\bm{x}_1,\bm{x}_2,\bm{x}_3,\bm{z})$ for fixed $\bm{x}_1 = \bm{x}_1^{*}$ and obtain the  optimal solution $\bm{x}_2= \bm{x}_2^*$ and $\bm{x}_3 = \bm{x}_{3}^*$;
        \STATE Maximize $R_1(\bm{x}_1,\bm{x}_2,\bm{x}_3,\bm{z})$ for fixed $\bm{x}_2 = \bm{x}_2^{*}$ and obtain the  optimal solution $\bm{x}_1= \bm{x}_1^{**}$ and $\bm{x}_3 = \bm{x}_{3}^{**}$;
        \IF {$||\bm{x}_1^* - \bm{x}_1^{**}|| + ||\bm{x}_3^* - \bm{x}_3^{**}||\le \varepsilon$}
        \STATE \textbf{break};
        \ELSE 
        \STATE $\bm{x}_1^* = \bm{x}_1^{**}$;
        \ENDIF
        \ENDFOR        
        \STATE \textbf{return} $(\bm{x}_1^*,\bm{x}_2^*,\bm{x}_3^*)$.
    \end{algorithmic}
\end{algorithm} 
 
 {\itshape Remark 1:} In order to construct $G_0$, the two users and the BS should know the user preferences $\mathcal{P}$. Once $\mathcal{P}$ is known and a NE is found, the two users will prefetch data according to the NE without communicating with each other. 

\subsection{Cooperative Game in $(\bm{B},N,\mathcal{P})$-Caching with Two Users}

In this subsection, a cooperative game in $(\bm{B},N,\mathcal{P})$-Caching with two users is investigated. A numerical algorithm is presented to find a PSNE for the two users. Then reasonable allocation schemes are proposed, which help in designing a pricing policy. 

If the two users do not cooperate with each other and the BS serves each user individually, i.e., $M_{\mathcal{U}}=\emptyset$ for $|\mathcal{U}|\ge 2$, the optimal caching strategy should be caching the content items that a user requests with the highest probabilities. This scheme is referred to as pure caching. The effective throughput resulting from pure caching is given by 
\begin{equation}\label{nocoop}
    R_k^p = \sum_{n=1}^{\lfloor C_k \rfloor}p_{k,(n)} + (C_k - \lfloor C_k \rfloor)p_{k, (\lceil C_k \rceil)},
\end{equation}
where $p_{k,(n)}$ is the $n$-th largest number among $p_{k,1},...,p_{k,N}$. If the two users does not cooperate in the placement phase, the payoffs at a NE in the noncooperative game, denoted by $R_1^n$ and $R_2^n$,  indicate their effective throughputs in the delivery phase.  If the two users cooperate with each other from the placement phase, additional profits (effective throughputs) can be obtained from their alignment.  In this case,  the two users do not exchange data  but their placement policy and delivery policy are designed jointly. Let $R^c$ denote the total effective throughput resulting from cooperation, which indicates the reduction in the number of bits transmitted by the BS due to user cooperation. The core problem in the cooperative game  is to design schemes to allocate the revenue $R^c$ such that the users are willing to  participate in cooperation.  

Let $\bm{y}=(y_1,y_2)$ be an allocation scheme. All the  allocation schemes that hold individual rationality and collective rationality form a core for the cooperative game
\begin{equation}
C(\bm{B},N, \mathcal{P}) = \{\bm{y}:y_1\ge R_1^n,y_2\ge R_2^n, y_1 + y_2 = R^c \}.
\end{equation}
Generally, the core for a cooperative game is likely to be empty \cite{core}. However, the following theorem reveals that  $C(\bm{B},N, \mathcal{P})$ is nonempty. 
\begin{theorem}\label{theo5}
    $C(\bm{B},N, \mathcal{P})$ is nonempty for any $\bm{B}$, $N$, and $\mathcal{P}$.
\end{theorem}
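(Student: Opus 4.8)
The plan is to reduce the claim to a single scalar inequality and then verify it directly. Writing the core as $C(\bm B,N,\mathcal P)=\{\bm y:y_1\ge R_1^n,\ y_2\ge R_2^n,\ y_1+y_2=R^c\}$, for two players such a set is nonempty precisely when $R_1^n+R_2^n\le R^c$ --- i.e.\ when the value of the grand coalition dominates the sum of the players' stand-alone (Nash) payoffs. So the whole proof amounts to establishing $R_1^n+R_2^n\le R^c$, after which $\bm y=(R_1^n,\,R^c-R_1^n)$ is immediately seen to lie in $C(\bm B,N,\mathcal P)$: $y_1=R_1^n\ge R_1^n$, $y_2=R^c-R_1^n\ge R_2^n$, and $y_1+y_2=R^c$.

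First I would recall that $R^c$ is the maximal total effective throughput attainable when the two users jointly design their UPAF placement and delivery policies; by Eq.~(\ref{total_eff}) this equals $\sum_k\sum_n p_{k,n}$ minus the minimal total number of (normalized) bits the BS must transmit, so $R^c=\max_{C_1,C_2}\big(R_1(C_1,C_2)+R_2(C_1,C_2)\big)$ over all feasible uncoded placements. The key observation is that cooperation cannot reduce the \emph{aggregate} payoff, because every profile of noncooperative strategies is itself a particular feasible joint policy and the BS already performs cost-minimizing delivery in both settings (Eqs.~(\ref{through1})--(\ref{through2})). Concretely, if $G_0$ admits a PSNE $(C_1^n,C_2^n)$, then $R_1^n+R_2^n=R_1(C_1^n,C_2^n)+R_2(C_1^n,C_2^n)\le R^c$ by definition of $R^c$. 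In the general case only a mixed NE is guaranteed (Theorem~\ref{existence}); then $R_1^n,R_2^n$ are the expected payoffs under the product of the users' mixed strategies, and by linearity of expectation $R_1^n+R_2^n=\mathbb E_{C_1,C_2}\big(R_1(C_1,C_2)+R_2(C_1,C_2)\big)\le R^c$, since each realized pair contributes at most $R^c$.

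Combining the two steps gives $\bm y=(R_1^n,R^c-R_1^n)\in C(\bm B,N,\mathcal P)$, so the core is nonempty, and the argument is uniform over $\bm B$, $N$, and $\mathcal P$. The one place that needs care --- the main, though modest, obstacle --- is the bookkeeping in the second step: one must be precise that the cost-minimizing delivery functions $\bar R_1,\bar R_2$ (hence $R_1,R_2$) are literally the same objects in the noncooperative game and in the cooperative game, so that a noncooperative (possibly randomized) equilibrium profile can be ``replayed'' inside the cooperative game without loss; once this is granted, the inequality $R_1^n+R_2^n\le R^c$ and the two rationality checks are immediate.
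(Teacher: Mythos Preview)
Your proposal is correct and follows essentially the same route as the paper: both reduce the claim to the single inequality $R_1^n+R_2^n\le R^c$ and then observe that any (pure or mixed) equilibrium profile is a feasible joint policy, hence its total payoff cannot exceed the cooperative maximum. The only cosmetic difference is that the paper handles the mixed-NE case by invoking Theorem~\ref{convexity} to conclude $(R_1^n,R_2^n)\in\mathcal{R}$ and then uses $R^c=\max_{(y_1,y_2)\in\mathcal{R}}(y_1+y_2)$, whereas you bypass convexity and bound the expected sum directly via linearity of expectation; your argument is slightly more elementary but otherwise identical in spirit.
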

\begin{proof}
    To prove this theorem, we only need to show $R_1^n + R_2^n\le R^c$.
    According to Theorem \ref{convexity}, all the achievable values of effective throughputs form a convex set $\mathcal{R}$. Thus we have $(R_1^n,R_2^n)\in \mathcal{R}$ no matter the payoffs are achieved by a mixed NE or a PSNE. The effective throughput from cooperation is given by $R^c = \max_{(y_1,y_2)\in \mathcal{ R}} (y_1+ y_2),$  which ensures that $R_1^n + R_2^n\le R^c$.
\end{proof}

All the allocation schemes in $C(\bm{B},N, \mathcal{P})$ are acceptable for rational users. In cooperative games, the satisfaction degree of a player on an allocation scheme is usually measured by the difference between the payoff resulting from noncooperative game  and the allocation, i.e., $R_k^n - y_k$. The lower the difference, the higher the degree of satisfaction. By maximizing the minimum satisfaction degrees among the two users, we reach an unique allocation scheme 
\begin{equation}\label{shapley}
R_k^c =R_k^n + \frac{R^c - R_1^n -R_2^n}{2},
\end{equation}
which is usually referred to as the  nucleolus in cooperative games and happens to be the Shapley values \cite{shapley}. It is seen that $R_1^c +R_2^c =R^c$ and $R_k^c \ge R_k^n$ according to Theorem \ref{theo5}. 

\begin{algorithm}[t]
    \caption{Allocation Scheme for $(\bm{B},N,\mathcal{P})$-Caching}
    \label{algo2}
    \renewcommand{\algorithmicrequire}{ \textbf{Input:}} 
    \renewcommand{\algorithmicensure}{ \textbf{Output:}} 
    \begin{algorithmic}[1]
        \REQUIRE   $\bm{B},N, \mathcal{P}, T, \varepsilon$
        \ENSURE Allocation  $(R_1^c,R_2^c)$
        \STATE Calculate $R^c$ by setting $\alpha = 0.5$ in Eq. ( \ref{op_1} ) and doubling the  optimization value;
        \STATE Call Algorithm \ref{algo1} and obtain $(\bm{x}_1^*,\bm{x}_2^*,\bm{x}_3^*)$;      
        \IF{$(\bm{x}_1^*,\bm{x}_2^*,\bm{x}_3^*)$ is a PSNE}
        \STATE Calculate $R_1^n$ and $R_2^n$ according to Eq. (\ref{throughput1});
        \STATE Calculate $R_1^c$ and $R_2^c$ according to Eq. (\ref{shapley});
        \ELSE
        \STATE Calculate $R_1^p$ and $R_2^p$ according to Eq. (\ref{nocoop});
        \STATE Calculate $R_1^c$ and $R_2^c$ according to Eq. (\ref{shapley2});
        \ENDIF
        \STATE \textbf{return} $(R_1^c,R_2^c)$.
    \end{algorithmic}
\end{algorithm}

Considering that $R^c$ represents the overall effective throughput, we only need to set $\alpha = 0.5$ in problem (\ref{op_2}) and multiply the  optimization value by two in order to obtain $R^c$.  The payoffs $R_1^c $ and $R_2^c $ also play a key role in the allocation scheme presented in Eq. (\ref{shapley}) and can be obtained by Algorithm \ref{algo1} with high probability. For the case that no PSNE is found in Algorithm \ref{algo1}, we suggest to allocate the cooperation gain according to $R_k^p$. More specifically, the effective throughput is given by 
\begin{equation}\label{shapley2}
R_k^c =R_k^p + \frac{R^c - R_1^p -R_2^p}{2}.
\end{equation}
Again, we have $R_1^c +R_2^c =R^c$ and $R_k^c \ge R_k^p$. 

 Algorithm \ref{algo2} summarizes the process of constructing an allocation scheme for the cooperative game in $(\bm{B},N,\mathcal{P})$-Caching with two users. Algorithm  \ref{algo2} can give an allocation scheme within low complexity but might not be fair enough when no NE is found. The possible unfairness results from the inherent hardness of computing NEs for noncooperative games.

\section{$(\bm{B},N,\mathcal{P})$-Caching in the General Multiuser Case}
In this section, we investigate $(\bm{B},N,\mathcal{P})$-Caching in the general multiuser case. A noncooperative game for $(\bm{B},N,\mathcal{P})$-Caching with multiple users is formulated.  However, it is intractable to find the NEs of this noncooperative game due to its large scale and payoffs. As a result, an algorithm is presented to give a placement and delivery policy directly. 

We consider uncoded placement processes in this section. Then the buffer states satisfy $C_k\subseteq \Omega$ for $k\in [K]$. Again, the $K$ users act as players.  User $k$'s strategy set is given by $\mathcal{S}_k^\text{m}=\{C_k\in \Omega:|C_k|\le b_kF \}$. After a strategy profile $(C_1,...,C_K)\in \prod_{k\in [K]}\mathcal{S}_k^\text{m}$ is selected and the demand matrix $\bm{D}$ is revealed, the BS applies index coding to minimize the total effective throughput. The resulting effective throughput of user $k$ is denoted by $R_k^\text{m}(C_1,...,C_K,\bm{D})$. Thus the payoff of user $k$ is given by $R_k^\text{m}=\mathbb{E}_{\bm{D}}\left(R_k^\text{m}(C_1,...,C_K,\bm{D})\right)$. If the $K$ users prefetch data individually in the placement phase, they form a noncooperative game $G^\text{m} = \{ \mathcal{S}_1^\text{m},...,\mathcal{S}_K^\text{m};R_1^\text{m},...,R_K^\text{m}\}$. It is intractable to find the NEs of $G^\text{m}$ since the strategy sets are of exponential sizes and closed-form expressions of the payoffs are not derived. 

Instead of calculating a NE, we present a UPAF policy to organize user cooperation directly in the following. Let us denote
\begin{equation}
    X_{\mathcal{U}} =\bigcap_{k\in \mathcal{U}} C_k\setminus \bigcup_{k\notin \mathcal{U}}C_k.
\end{equation}
Then, $ X_{\mathcal{U}} $ denotes the set of bits cached exclusively in users in $\mathcal{U}$. The demand matrix $\hat{\bm{D}}$ is revealed in the delivery phase. The set of bits that user $k$ wants to recover is given by $Q_k = \{W_n,n\in \hat{D}_k\}$. We denote
\begin{equation}
\begin{split}
Y_{\mathcal{U},\mathcal{V}} =\left( \bigcap_{k\in \mathcal{U}} (Q_k\setminus C_k) \setminus \bigcup_{k\notin \mathcal{U}} (Q_k\setminus C_k) \right) \bigcap X_{\mathcal{V}}. 
\end{split}
\end{equation}
Then,  $Y_{\mathcal{U}, \mathcal{V}}$ stands for the set of bits requested only by users in $\mathcal{U}$ and cached only by users in $\mathcal{V}$. One can see that the sets $Y_{\mathcal{U},\mathcal{V}} $ are disjoint for different $\mathcal{U}$ or $\mathcal{V}$, and $Y_{\mathcal{U},\mathcal{V}} =  \emptyset $ if $\mathcal{U}\cap \mathcal{V} \neq  \emptyset.$

In the delivery phase, the requests for the content items can be decomposed into requests for the sets $Y_{\mathcal{U}, \mathcal{V}}$. The BS needs to transmits messages to help the users recover  $Y_{\mathcal{U},\mathcal{V}}$. A rough UPAF policy is just to set $M_{\mathcal{U}}=\bigcup_{\mathcal{V}\subseteq  [K]} Y_{\mathcal{U},\mathcal{V}}$. Notice that every user caches side information that others are interested in. Users can cooperate to obtain higher effective throughputs by index coding. We define 
\begin{equation}
    Z_{\mathcal{U},\mathcal{V}} = \bigcup_{\mathcal{S}:\mathcal{V}\subseteq  \mathcal{S}} Y_{\mathcal{U},\mathcal{S}}.
\end{equation}
Then, $Z_{\mathcal{U},\mathcal{V}} $ is the set of bits requested only by users in $\mathcal{U}$ and at least cached by users in $\mathcal{V}$. For two sets  $Z_{\mathcal{U}_1,\mathcal{V}_1}$ and $Z_{\mathcal{U}_2,\mathcal{V}_2}$,  it is seen that $Z_{\mathcal{U}_1,\mathcal{V}_1}\cap Z_{\mathcal{U}_2,\mathcal{V}_2}=\emptyset $  if  $\mathcal{U}_1\neq  \mathcal{U}_2$, and $Z_{\mathcal{U}_1,\mathcal{V}_1}\subseteq Z_{\mathcal{U}_2,\mathcal{V}_2}$ if $\mathcal{U}_1= \mathcal{U}_2$  and $\mathcal{V}_2\subseteq\mathcal{V}_1$. Let us consider $J$ set pairs $\{(\mathcal{U}_j,\mathcal{V}_j):j\in [J]\}$ satisfying
\begin{equation}\label{condition}
\begin{cases}
\mathcal{U}_i\subseteq \mathcal{V}_j,& i\neq j,\\
\mathcal{U}_i\cap \mathcal{V}_j = \emptyset,& i=j.
\end{cases}
\end{equation} 
It can be verified that Eq. (\ref{condition}) ensures that $Z_{\mathcal{U}_j,\mathcal{V}_j}$ are disjoint for $j\in [J]$.
 Let $Z_{\mathcal{U}_j,\mathcal{V}_j}^t$ denote the $t$-th bit in $Z_{\mathcal{U}_j,\mathcal{V}_j}$. By transmitting $\oplus_{j\in [J]}Z_{\mathcal{U}_j,\mathcal{V}_j}^t$ to users in $\bigcup_{j\in [J]}\mathcal{ U}_j$, users in $\mathcal{ U}_j$ can recover $Z_{\mathcal{U}_j,\mathcal{V}_j}^t$, because all the other bits have been cached in the buffers of users in $\mathcal{ U}_j$. The transmission cost of one bit is $\frac{1}{\sum_{j\in[J]}|\mathcal{U}_j|}$. To  maximize the effective throughputs, we need to carefully group $Z_{\mathcal{ U},\mathcal{V}}$. 

\begin{algorithm}[t]
    \caption{A UPAF Policy for the General Multiuser Case}
    \label{algo3}
    \renewcommand{\algorithmicrequire}{ \textbf{Placement Phase:}} 
    \renewcommand{\algorithmicensure}{ \textbf{Delivery Phase:}} 
    \begin{algorithmic}[1]
        \REQUIRE ~\\
        \STATE Cache the $  b_k $ most popular content items in the buffer of user $k$, i.e., $C_k=\{W_n:p_{n,k}\text{ is one of } b_k$ $ \text{largest numbers }\text{ for fixed }k \}$;
        \ENSURE ~\\
        \STATE Compute $Y_{\mathcal{U}, \mathcal{V}}$ and $Z_{\mathcal{U}, \mathcal{V}}$ according to  $\hat{\bm{D}}$;
        \STATE Initialization: $M_{\mathcal{ U}}=\emptyset$ for $\mathcal{ U}\subseteq [K]$;
        \FOR{$k=K,K-1,...,1$} 
        \FOR {$\mathcal{U}\subseteq [K]$ and  $|\mathcal{U}|=k$}
        
        \STATE Split $\mathcal{U}$ into $J$ disjoint subsets $(\mathcal{U}_j)_{j\in[J]}$ such that users in $\mathcal{U}_j$ ask for the same content items, i.e., $U_{k_1}$ and $U_{k_2}$ are classified into the same subset if $\hat{D}_{k_1}=\hat{D}_{k_2}$;
        \STATE $T=\min_{j\in [J]}|Z_{\mathcal{ U}_j,\mathcal{ U}\setminus \mathcal{ U}_j}|$;
        \STATE $M_{\mathcal{ U}}= M_{\mathcal{ U}}\cup \left\lbrace \oplus_{j\in [J]} Z_{\mathcal{ U}_j,\mathcal{ U}\setminus \mathcal{ U}_j}^t: t\in [T]\right\rbrace $;
        \FOR{$j\in [J]$}
        \FOR{$\mathcal{S}\subseteq \mathcal{ U}\setminus \mathcal{ U}_j$}
        \STATE $Z_{\mathcal{ U}_j,\mathcal{ S}} = Z_{\mathcal{ U}_j,\mathcal{ S}} \setminus \left\lbrace  Z_{\mathcal{ U}_j,\mathcal{ U}\setminus \mathcal{ U}_j}^t: t\in [T]\right\rbrace;$
        \ENDFOR
        \ENDFOR
        \ENDFOR
        \ENDFOR 
        \FOR {$\mathcal{U}\subseteq [K]$}
        \STATE $M_{\mathcal{U}} =M_{\mathcal{U}} \cup Z_{\mathcal{ U},\emptyset}; $
        \ENDFOR
        \STATE Multicast $M_{\mathcal{U}}$ to users in $\mathcal{ U}$.
        
    \end{algorithmic}
\end{algorithm}

 A grouping method for $Z_{\mathcal{ U},\mathcal{V}}$ is presented in Algorithm \ref{algo3}.  The $b_k$ most popular  content items are cached in the  buffer of user $k$ in the placement phase.\footnote{If $b_k$ is not an integer, user $k$ caches the $\lfloor b_k \rfloor$ most popular content items and $b_k-\lfloor b_k \rfloor$ of the $\lceil b_k \rceil$-th most popular content item in its buffer. } For each subset $\mathcal{ U}\subseteq [K]$, we create a group of set pairs $\{(\mathcal{ U}_j, \mathcal{U}\setminus \mathcal{ U}_j)\}$ according to the demand matrix $\hat{\bm{D}}$. Each coded bit is generated by taking a bit from $Z_{\mathcal{ U}_j,\mathcal{ U}\setminus \mathcal{ U}_j}$. In Steps 9 to 13, the sets $Z_{\mathcal{ U},\mathcal{V}}$ are updated. If some bits cannot be recovered from the coded bits, these bits are added to $M_{\mathcal{U}}$ in Steps 16 to 18. It is seen that Algorithm \ref{algo3} is absolutely fair, because each user in $\mathcal{ U}$ can recover $H(M_{\mathcal{U}})$ bits from $M_{\mathcal{U}}$.\footnote{ $H(\cdot)$ denotes the entropy.}  Since each user prefetches data individually in the placement phase, Algorithm \ref{algo3} is decentralized. 
 
 Let $R^3_k$ be the effective throughput of user $k$ from Algorithm 3. We have the following theorem. 
\begin{theorem}\label{theo6}
    $R^3_k\ge R_k^p$ for all  $k$.
\end{theorem}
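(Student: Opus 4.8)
The plan is to show that Algorithm 3 delivers at least as much effective throughput to each user as pure caching does, by comparing the two schemes realization by realization of the demand matrix. Since both $R_k^3$ and $R_k^p$ are expectations over $\bm{D}$, and since $R_k^p$ is exactly the effective throughput one would get by caching the $b_k$ most popular content items and serving each user with $M_{\{k\}} = Y_{\{k\}}$ (i.e. no multicasting, each user pays the full cost of its own uncached requests), it suffices to prove the pointwise inequality $R_k^3(C_1,\dots,C_K,\hat{\bm D}) \ge R_k^p(\hat{\bm D})$ for every fixed $\hat{\bm D}$, where $R_k^p(\hat{\bm D})$ denotes the delivery cost reduction under pure caching for that realization. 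The key observation is that Algorithm 3 uses precisely the popularity-based placement $C_k = \{W_n : p_{n,k}\text{ among the }b_k\text{ largest}\}$, so the two schemes share the same buffer states $C_k$ and hence the same sets $X_{\mathcal U}$, $Y_{\mathcal U,\mathcal V}$, $Z_{\mathcal U,\mathcal V}$.

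With a common placement, the comparison reduces to the delivery phase: under pure caching, user $k$ recovers its uncached requested bits entirely on its own, paying transmission cost equal to $|Q_k \setminus C_k|/F$; under Algorithm 3, user $k$ still recovers exactly the same set of bits $Q_k\setminus C_k$ (Algorithm 3 is shown in the text to let each user in $\mathcal U$ recover $H(M_{\mathcal U})$ bits, so all requests are met), but now some of those bits are delivered inside coded messages $M_{\mathcal U}$ with $|\mathcal U|\ge 2$, or inside $M_{\{k\}}$ alone; in either case user $k$'s share of the cost of any message $M_{\mathcal U}$ is only $|M_{\mathcal U}|/(|\mathcal U| F)\le |M_{\mathcal U}|/F$. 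So I would argue that the total cost $\sum_{\mathcal U : k\in\mathcal U} |M_{\mathcal U}|/(|\mathcal U| F)$ charged to user $k$ by Algorithm 3 never exceeds $|Q_k\setminus C_k|/F$. Concretely, for each bit $b\in Q_k\setminus C_k$ there is a unique $\mathcal U\ni k$ and $\mathcal V$ with $b\in Z_{\mathcal U,\mathcal V}$ at the moment it is served, and Algorithm 3 serves $b$ either as part of a coded symbol $\oplus_j Z_{\mathcal U_j,\mathcal U\setminus\mathcal U_j}^t$ (for which $k$'s share is $1/\sum_j|\mathcal U_j|\le 1$) or as part of $M_{\mathcal U}$ directly; summing over bits and using that each requested-uncached bit is served exactly once gives $\sum_{\mathcal U:k\in\mathcal U}|M_{\mathcal U}|/(|\mathcal U|F)\le |Q_k\setminus C_k|/F$, hence $R_k^3(C_1,\dots,C_K,\hat{\bm D})\ge |Q_k\setminus C_k|\cdot 0 + \dots$ — more precisely $R_k^3(\cdot,\hat{\bm D}) = \sum_n \hat d_{k,n} - \sum_{\mathcal U:k\in\mathcal U}|M_{\mathcal U}|/(|\mathcal U|F) \ge \sum_n \hat d_{k,n} - |Q_k\setminus C_k|/F = R_k^p(\hat{\bm D})$, since under pure caching the cost is exactly $|Q_k\setminus C_k|/F$. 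Taking expectations over $\bm D$ yields $R_k^3\ge R_k^p$.

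The step I expect to be the main obstacle is making rigorous the claim that Algorithm 3 charges user $k$ at most one unit (normalized) of cost per requested-uncached bit, i.e. that the bookkeeping in Steps 4--15 never double-charges a bit and never assigns a bit to a message $M_{\mathcal U}$ with $k\notin\mathcal U$ when $k$ requested it. This requires tracking the invariant that after processing a bit it is removed from all the relevant $Z_{\mathcal U_j,\mathcal S}$ (Step 11), so each bit of $Q_k\setminus C_k$ contributes to exactly one transmitted symbol that user $k$ pays a $\le 1$ share of, and that the leftover bits swept into $M_{\mathcal U}$ in Steps 16--18 are exactly those in $Z_{\mathcal U,\emptyset}$, which are requested by all of $\mathcal U$ and cached by none, so again assigned to the correct message. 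Once this invariant is established, the per-realization inequality and the final result follow by the linearity-of-expectation argument above; I would also note that equality holds when $\hat{\bm D}$ creates no multicasting opportunities, so the bound is tight.
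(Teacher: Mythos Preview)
Your proposal is correct and follows essentially the same line as the paper's proof, only worked out in far more detail. The paper's argument is a two-sentence sketch: since Algorithm~3 uses the same popularity-based placement as pure caching, and since every bit user $k$ recovers through a coded message $\oplus_j Z_{\mathcal U_j,\mathcal U\setminus\mathcal U_j}^t$ costs $k$ only $1/|\mathcal U|\le 1$ instead of the unit cost under pure caching, the effective throughput can only go up. Your per-realization, per-bit accounting (each bit of $Q_k\setminus C_k$ is delivered exactly once, always inside some $M_{\mathcal U}$ with $k\in\mathcal U$, hence charged at rate $1/|\mathcal U|\le 1$) makes this rigorous and correctly flags the bookkeeping invariant in Steps~9--13 and~16--18 as the point requiring care; the paper simply asserts it.
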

\begin{proof}
   In Algorithm \ref{algo3}, users recover some bits from coded messages $ \oplus_{j\in [J]} Z_{\mathcal{ U}_j,\mathcal{ U}\setminus \mathcal{ U}_j}^t$. The transmission cost of bits recovered from these messages is $\frac{1}{|\mathcal{ U}|}$, lower than 1. Thus, $R_k^3\ge R_k^p$.
\end{proof}
Theorem \ref{theo6} indicates that  Algorithm \ref{algo3} provides a higher effective throughput for each user than pure caching does. Even though no NE is used in Algorithm \ref{algo3}, Algorithm \ref{algo3} is fair if it is executed enough times and the user preferences are randomly generated each time. This is because one user may gain extra advantage one time and may also suffer losses next time.

\section{Simulation Results}

In the section, we present numerical results to validate the theoretical analysis and illustrate the effective throughput gains from user cooperation. In the simulations, we assume that each user independently requests one content item in the delivery phase. Therefore, the probability measure $\mathcal{P}$ can be fully characterized by the user preference matrix $\bm{P}$.  Throughout this section, we always assume that the users are equipped with buffers of identical sizes, $b_k=B$ for $k\in [K]$. To validate the performance of the proposed algorithms, we compare them with pure caching. 


\begin{figure}
    \begin{minipage}[t]{0.5\linewidth}
        \centering
        \includegraphics[width=2.5in]{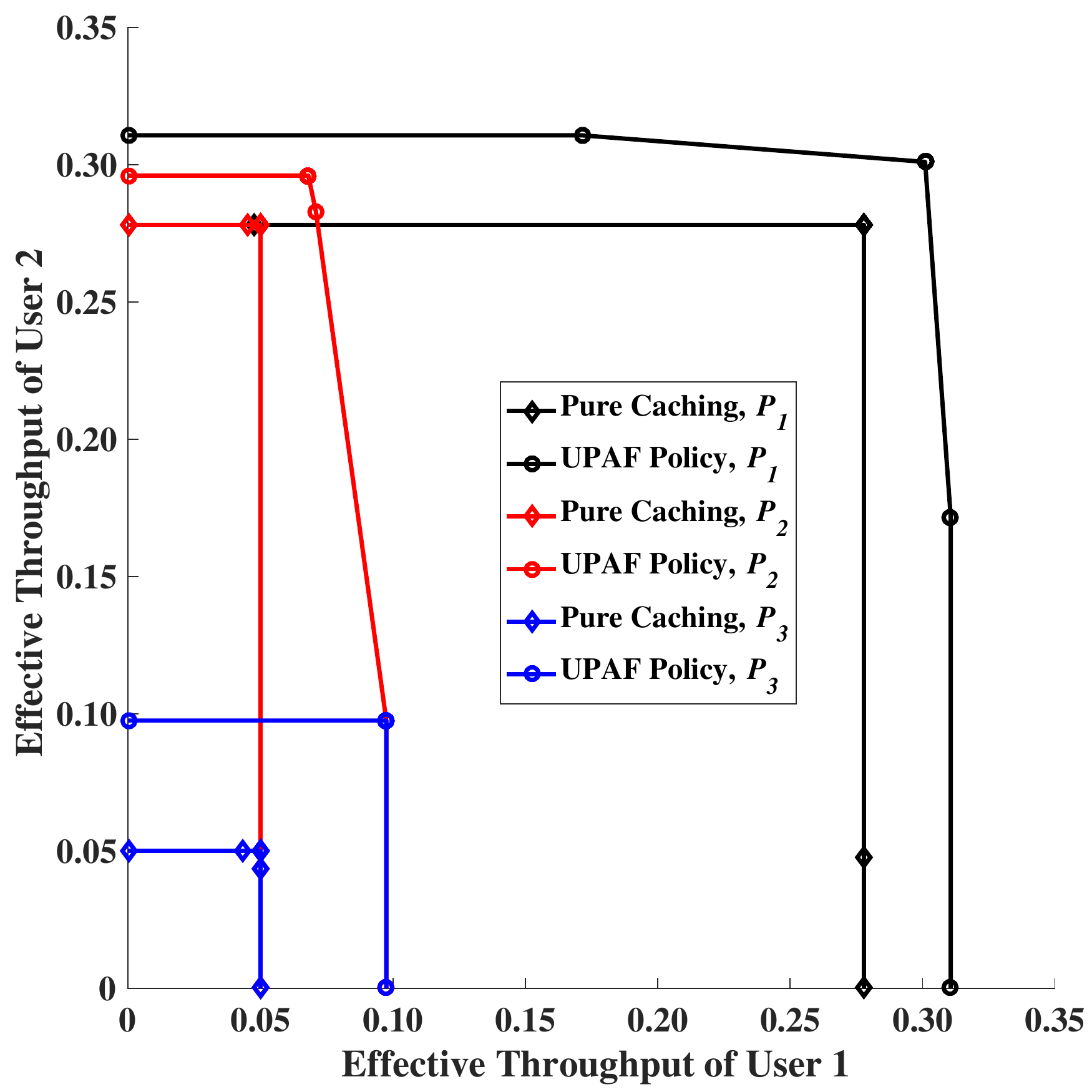}
 \caption{Achievable domain under UPAF policies for      $\hspace{0.1cm}$ $(\bm{B},N,\mathcal{P})$-Caching with two users and different preference$\hspace{0.4cm}$ matrices.}    
\label{different_p}  
    \end{minipage}%
    \begin{minipage}[t]{0.5\linewidth}
        \centering
        \includegraphics[width=2.5in]{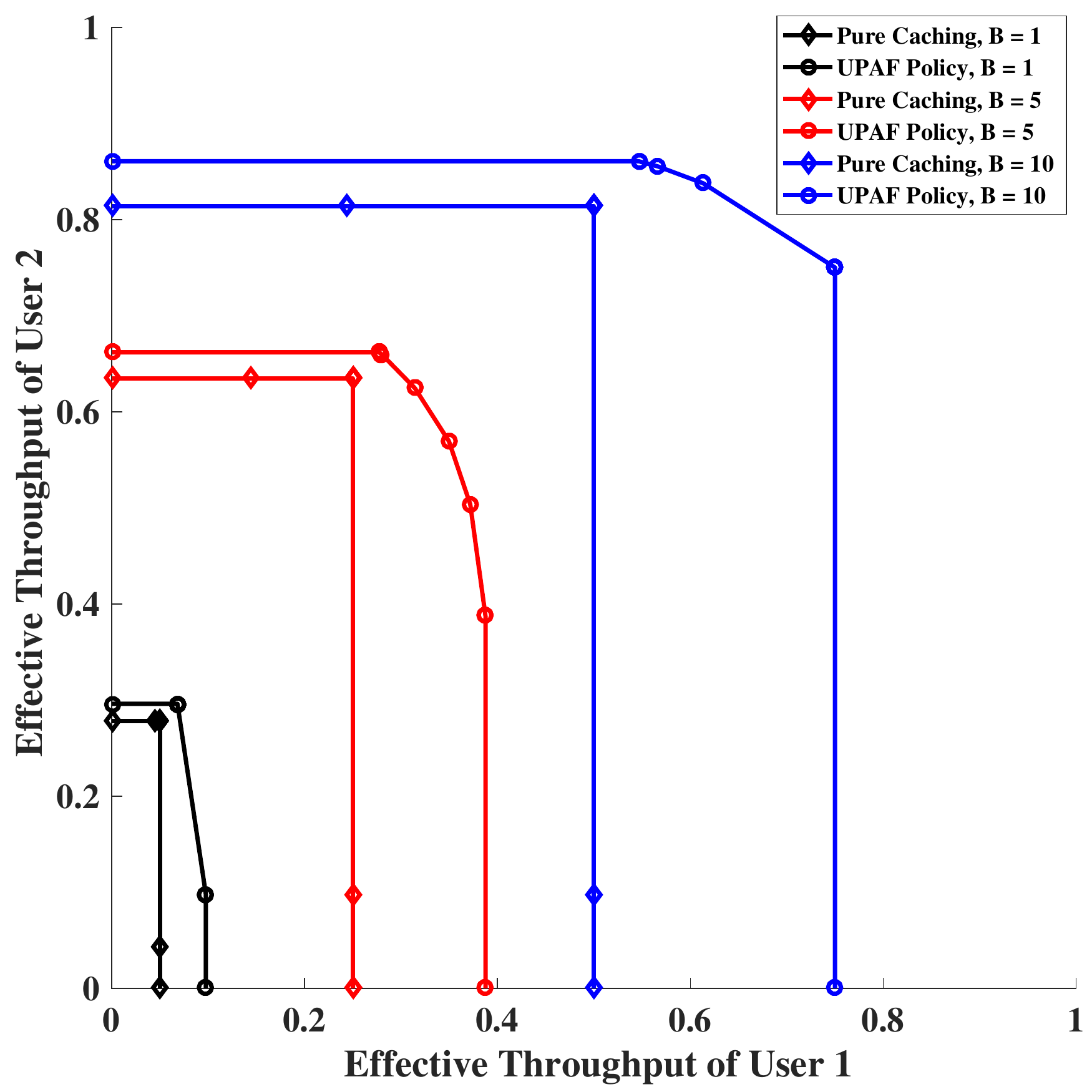}
    \caption{Achievable domain under UPAF policies for $(\bm{B},N,\mathcal{P})$-Caching with two users and different buffer sizes. Preference matrix $\bm{P}=\bm{P}_2.$}    
\label{different_b}  
    \end{minipage}
\end{figure}

In Fig. \ref{different_p}, we consider $(\bm{B},N,\mathcal{P})$-Caching with two users and present the achievable domain under UPAF policies for  three different preference matrices.
The number of content items and the buffer size are set to be $N=20$ and $B=1$, respectively. Two types of user preferences are considered, i.e.,  Zipf's law  $\bm{p}_{\text{zipf}} = \left[\frac{1^{-1}}{\sum_{i=1}^{20}i^{-1}},...,\frac{20^{-1}}{\sum_{i=1}^{20}i^{-1}}\right]$ and uniform distribution $\bm{p}_{\text{unif}} = \left[\frac{1}{20},...,\frac{1}{20}\right]$.  The three preference matrices are set to be
\begin{equation*}
    \begin{split}
    \bm{P}_1= \begin{bmatrix}
    \bm{p}_{\text{zipf}} \\
    \bm{p}_{\text{zipf}}
     \end{bmatrix},
          \bm{P}_2 = \begin{bmatrix}
    \bm{p}_{\text{unif}}\\
     \bm{p}_{\text{zipf}} 
      \end{bmatrix},
          \bm{P}_3 = \begin{bmatrix}
      \bm{p}_{\text{unif}}  \\
      \bm{p}_{\text{unif}} 
    \end{bmatrix}.
    \end{split}
\end{equation*}
  It is seen that UPAF policies always attain larger achievable domains than pure caching policies do, which demonstrates the potential of improving the effective throughput by user cooperation. Corresponding to the analysis in Section \ref{twousersection}, the  achievable domains under UPAF policies are polygons. When $\bm{P} = \bm{P}_1 $ or $\bm{P}_3$, the achievable domains have reflectional symmetry. This is because user $1$ and user $2$ have the same preference distribution. When $\bm{P} = \bm{P}_2 $, the preference of  user $2$ is more concentrated. In this case, the achievable domain under $\bm{P}_2$ loses the  reflectional symmetry and user $2$ can obtain  higher maximum effective throughput than user $1$ does.


In Fig. \ref{different_b}, we present the achievable domain under UPAF policies for $(\bm{B},N,\mathcal{P})$-Caching with two users and different buffer sizes. The preference matrix is set to be $\bm{P}=\bm{P}_2$. It is not surprising that the larger the buffer size, the larger the achievable domain. Again,  the achievable domains under UPAF policies are polygons and user 2 obtain higher maximum effective throughput than user 1 does. 

\begin{figure}
    \begin{minipage}[t]{0.5\linewidth}
    \centering
\includegraphics[width=2.7in]{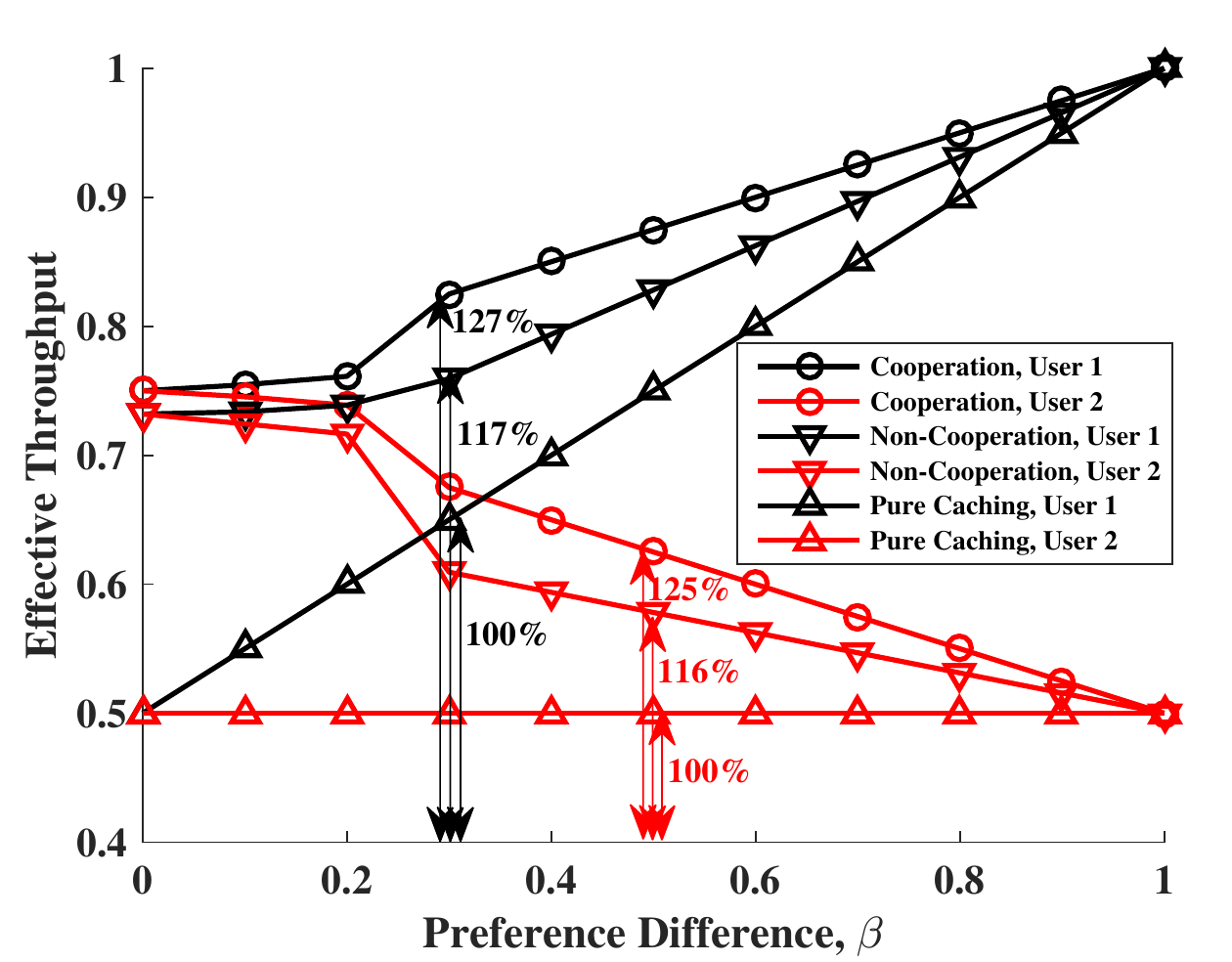}
\caption{Effective throughput versus preference difference $\beta$ for$\hspace{0.2cm}$ $(\bm{B},N,\mathcal{P})$-Caching with two users.}    
\label{game_beta}  
    \end{minipage}%
    \begin{minipage}[t]{0.5\linewidth}
    \centering
\includegraphics[width=2.7in]{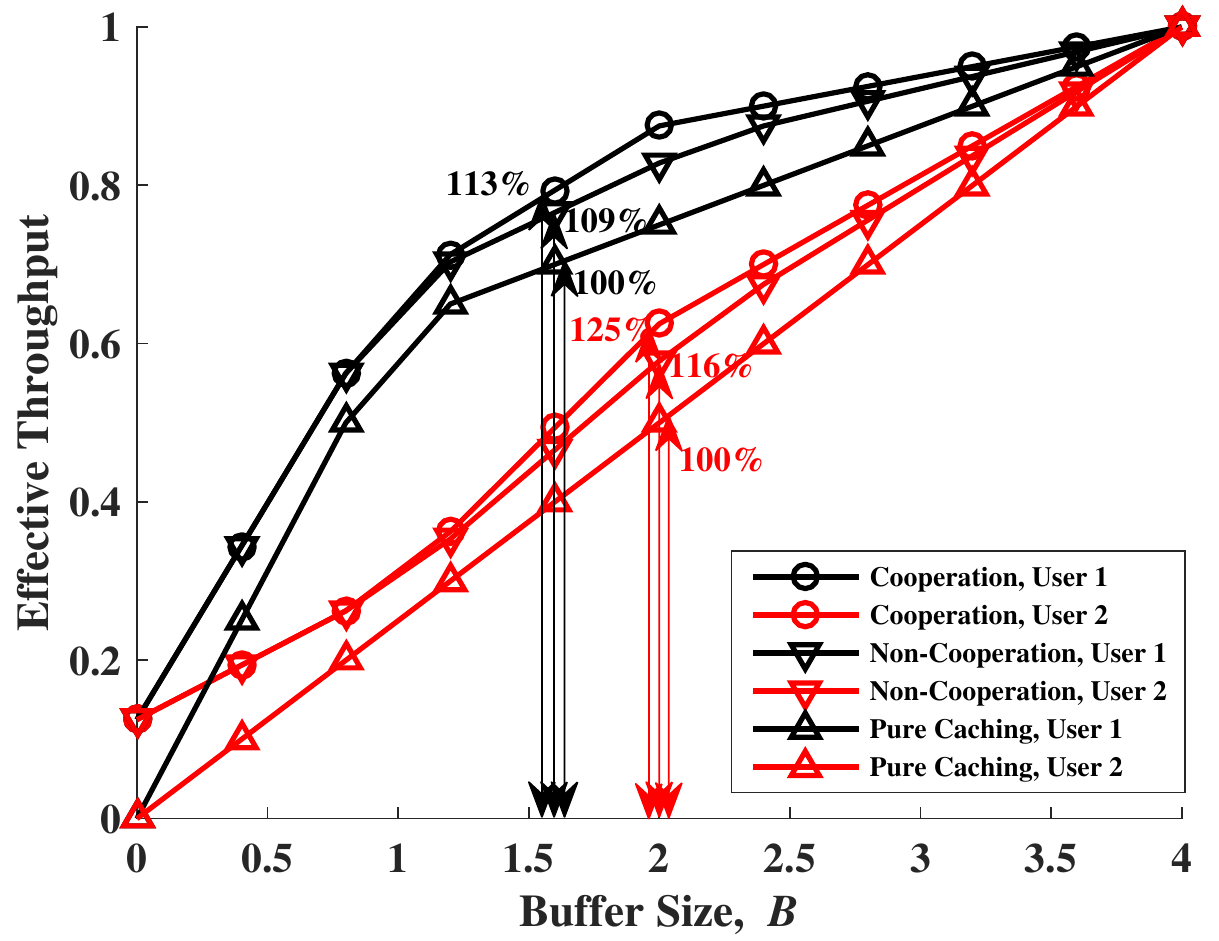}
\caption{Effective throughput versus buffer size for $(\bm{B},N,\mathcal{P})$-Caching with two users.}    
\label{game_buffer}  
    \end{minipage}
\end{figure}


To illustrate the advantage of user cooperation, we consider a $(\bm{B},N,\mathcal{P})$-Caching with two users, where the number of content items and the buffer size are set to be $N=4$ and $B=2$. The  preference matrix is given by 
\begin{equation*}
\begin{split}
\bm{P}(\beta)& = \begin{bmatrix}
0.25+ 0.75\beta  & 0.25(1-\beta) & 0.25(1-\beta)& 0.25(1-\beta) \\
0.25 & 0.25 & 0.25 & 0.25 
\end{bmatrix} .
\end{split}
\end{equation*}
The user preference  of user 1 is a linear combination of two vectors, $[0.25,0.25,0.25,0.25]$ and $[1,0,0,0]$. Thus the parameter $\beta$ describes the preference difference between the two users and how concentrated the user 1's preference is. Fig. \ref{game_beta} presents the effective throughput versus $\beta$.   It can be seen that cooperative games always reach higher effective throughputs than noncooperative games and pure caching do. The  effective throughput of user 1 increases with $\beta$, which validates again that more concentrated preferences helps to create higher effective throughputs. Under the setting in the simulation, user cooperation can even bring over 25\% additional effective throughput.   In contrast,  effective throughput of user 2 decreases with $\beta$. That is because concentrated preference helps user 1 to dominate the  game between the two users. Theorem \ref{psne} ensures that there exists a PSNE when $\beta$ is small. In the simulations, Algorithm \ref{algo1} always finds a PSNE  for $\beta$ varying from 0 to 1  ($T$ and $\varepsilon$ are set to be $100$ and $10^{-5}$ respectively). We also tested Algorithm \ref{algo1} for different $\bm{B}, N,$ and randomly generated $\bm{P}$. Algorithm \ref{algo1} returned a PSNE  with probability exceeding 90\%. Simulations demonstrate that a PSNE exists on a scale greater than that Theorem \ref{psne} characterizes.
%

Fig. \ref{game_buffer} presents effective throughput versus buffer size for $(\bm{B},N,\mathcal{P})$-Caching with two users. The parameter $\beta$ in the preference matrix $\bm{P}(\beta)$ is set to be $\beta =0.5$. It is not surprising to see that the effective throughputs always increase with the buffer size. Again, user 1 dominates the  games and  always obtains a higher effective throughput than user 2 does. User cooperation can bring a considerable increase in effective throughput. When the buffer size is small, the gap between cooperative game and noncooperative game is small. That is because multicasting opportunities are few due to the lack of side information. When the buffer size tends to the number of content items, the gaps between cooperative game, noncooperative game, and pure caching shrink, because few data need to be transmitted in the delivery phase.

 Fig. \ref{multiuser}   presents effective throughput versus buffer size for $(\bm{B},N,\mathcal{P})$-Caching in the general multiuser case.  In the simulation, we consider there are three users and the preference matrix is 
 \begin{equation*}
 \begin{split}
 \bm{P}_4& = \begin{bmatrix}
 0.7 & 0.2 & 0.1 & 0  \\
  0.4 & 0.3 & 0.2 & 0.1\\
 0.25 & 0.25 & 0.25 & 0.25
 \end{bmatrix}.
 \end{split}
 \end{equation*}
 For both Algorithm \ref{algo3}
 and pure caching, the effective throughput increases with the buffer size.   Algorithm \ref{algo3} always achieves  a higher effective throughput than pure caching does. However, the gap between Algorithm \ref{algo3}  and pure caching shrinks with increasing the buffer size. That is because the  effective throughput gain of Algorithm \ref{algo3}  mainly comes from multicasting for small buffer size. In Fig. \ref{multiuser}, user $1$ always obtains a higher effective throughput than user $2$ does, and user $2$ always achieves a higher effective throughput than user $3$ does. The more concentrated the preference, the higher effective throughput the user can obtain. 

   \begin{figure} 
    \centering
    \includegraphics[width=2.2in]{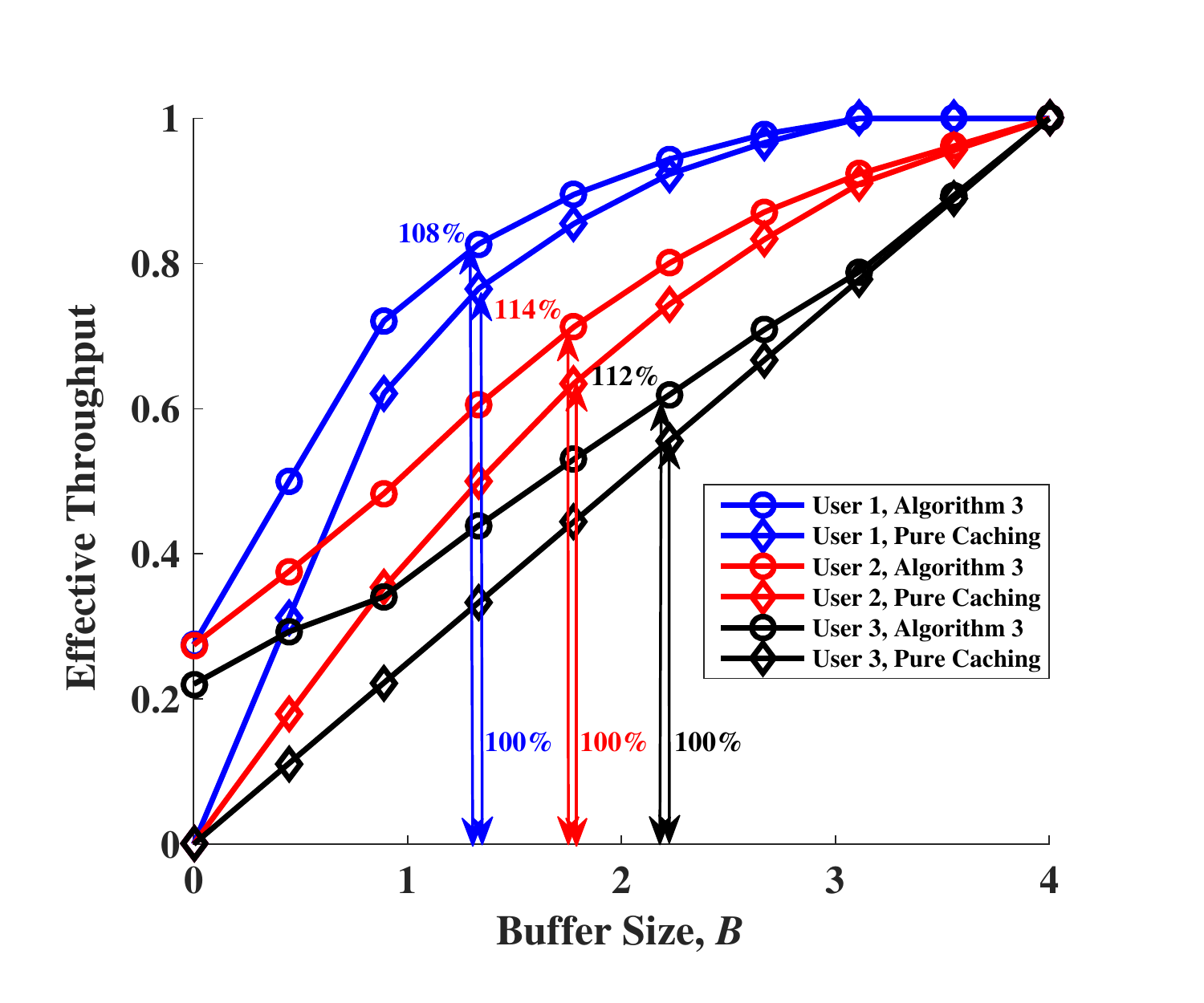}
    \caption{Effective throughput versus buffer size for $(\bm{B},N,\mathcal{P})$-Caching with three users.}    
    \label{multiuser}  
\end{figure}
\section{Conclusion}
In this paper, we have studied the impact of heterogeneous user preferences on effective throughput in cache-aided networks. In particular, the effective throughput is defined by splitting the transmission costs and can be viewed as the number of equivalent cache hits in each user's buffer. It has been proved that the achievable domain of effective throughputs   is a convex set and can be totally characterized by its boundary in the positive orthant.   For $(\bm{B},N,\mathcal{P})$-Caching with two users, both theoretical analysis and simulation results have demonstrated  that the achievable domain 
under UPAF policies is a polygon. A noncooperative game has  been formulated to investigate Nash equilibria between the two users. It has been proved that the noncooperative game has  PSNEs if user preferences are similar. A low-complexity algorithm has been proposed to provide a PSNE for the noncooperative game with high probability. In addition, an effective throughput allocation scheme has been given based on  a cooperative game.  For $(\bm{B},N,\mathcal{P})$-Caching in the general multiuser case, a feasible UPAF policy has also been proposed to organize user cooperation. It has  also been shown that user cooperation can bring significant caching gains and the achievable domain of effective throughputs   expands with the increase of the buffer size. Users with more concentrated preferences can obtain  higher effective throughputs.
Significant future topics include games for $(\bm{B},N,\mathcal{P})$-Caching with multiple users, practical user preference analysis based on real data, and the analysis of effective throughput in cache-aided D2D networks and heterogeneous networks.
 
\bibliographystyle{IEEEtran}
\linespread{0.86}

\end{document}